\newcommand{\ve}[1]{\mathbf{#1}}
\newcommand{\rmv}{\hspace*{-.3mm}}
\newtheorem{Theorem}{Theorem}
\newtheorem{Example}{Example}
\newcommand{\revMath}[1]{\color{black}{#1}\color{black}}  
\begin{document}

\title{On the Impact of Transposition Errors in \\ Diffusion-Based Channels}

\author{Werner~Haselmayr,~\IEEEmembership{Member,~IEEE,}
        Neeraj Varshney,~\IEEEmembership{Student Member,~IEEE,}
        \mbox{A. Taufiq~Asyhari},~\IEEEmembership{Member,~IEEE,} \\
        Andreas~Springer,~\IEEEmembership{Member,~IEEE,}
        and~Weisi~Guo\IEEEauthorrefmark{1},~\IEEEmembership{Member,~IEEE}

\thanks{Manuscript received month day, 2018; revised month day, year; accepted month day, year.  \IEEEauthorrefmark{1}Corresponding author. 
W. Haselmayr and A.~Springer are with the Johannes Kepler University Linz, Austria (\mbox{email: \{werner.haselmayr, andreas.springer\}@jku.at}). 
N. Varshney is with the Department of Electrical Engineering, Indian Institute of Technology Kanpur, India (\mbox{email: neerajv@iitk.ac.in}).
A. T. Asyhari is with the Centre for Electronic Warfare, Information and Cyber, Cranfield University, United Kingdom (\mbox{email: taufiq-a@ieee.org}). 
 W. Guo is with the School of Engineering, University of Warwick, United Kingdom (\mbox{email: weisi.guo@warwick.ac.uk)}.}}

\maketitle


\begin{abstract} 
  In this work, we consider diffusion-based molecular communication with and without drift between two static nano-machines. We employ type-based information encoding, releasing a single molecule per information bit. 
  At the receiver, we consider an asynchronous detection algorithm which exploits the arrival order of the molecules. In such systems, transposition errors fundamentally undermine reliability and capacity. Thus, in this work 
  we study the impact of transpositions on the system performance. Towards this, we present an analytical  expression for the exact bit error probability~(BEP) caused by transpositions and derive computationally tractable approximations of the BEP for diffusion-based channels with and without drift. Based on these results, we analyze the BEP when background is not negligible and derive the optimal bit interval that minimizes the BEP. Simulation results confirm the theoretical results and show the error and goodput performance for different parameters such as block size or noise generation rate.

\end{abstract}
 
\begin{IEEEkeywords}
  Asynchronous detection, diffusion-based channels, L\'{e}vy distribution, molecular communication, inverse Gaussian distribution, transposition effect.
\end{IEEEkeywords}

\section{Introduction}
\label{sec:intro}

\IEEEPARstart{M}{olecular} communication (MC) broadly defines the transmission of information using biochemical molecules over multiple distance scales \cite{Nakano2013,Farsad16}. Within multi-cellular organisms, MC within cells, between local cells, and across the body of the organism (e.g., hormones) is essential for coordinated cellular action-reaction. Between organisms, MC takes place over several kilometers distance in air and under water (e.g., pheromones), and is used to signal intent, assist navigation, and warn of impending dangers~\cite{Wyatt2003}. The aforementioned MC largely relies on messenger molecules to transverse channels using some form of normal or anomalous diffusion mechanism, potentially combining microscopic discrete random walk with macroscopic continuum fluid mechanics. 

Due to the potential for ultra-high energy efficiency \cite{Rose15ICC}, device dimension scalability, and bio-compatibility, diffusion-based MC has gathered intense research interest in recent time. \revMath{Currently, the majority of research in diffusion-based MC can be split between\footnote{Although most research activities fall in one of these categories, we are aware that there exists other areas where significant efforts have been made.}: {(i)~Fundamental understanding and modeling of molecular signaling (e.g., \cite{Pierobon11, Srinivas12,Yilmaz14})}; {(ii)~Design, fabrication and testing of human-made MC systems (e.g.,~\cite{Koo16,Farsad17,Diez17})}; {(iii)~Applying the MC paradigm to nano-medicine applications (e.g.,~\cite{Chahibi17, Okonkwo17,Felicetti16}).}
}

\subsection{Motivation and Related Work} 
In diffusion-based MC the information can be encoded in the molecular concentration level, the release time of the molecules, and the type of molecules \cite{Farsad16}. Moreover, a combination of the aforementioned techniques is  also  possible.  Most existing work in MC considers information encoding in the molecular concentration level (e.g., \cite{Mahfuz16} and the references therein). The detection algorithms are based on the received concentration level, which is sampled at  pre-determined time instants. The detector can rely on the law of large numbers, whereby the arrival time of the peak does not vary significantly. For concentration-encoded MC intersymbol interference (ISI) is the dominant error source and a vast amount of work  has been devoted to this issue in the past (e.g., \cite{Arjmandi17} and the references therein).

However, since it is envisioned that MC employs nano-machines with very limited capabilities, it is very likely that molecular signals are represented by a limited set of molecules or molecular clusters rather than on the emission of a large number of molecules \cite{Lin15}. Here, the detection algorithms for time- and type-based information encoding exploit the arrival time or the arrival order of the molecules, respectively. Due to the stochastic nature of diffusion-based channels, it may occur that a sequence of transmitted molecules  arrive out of order at the receiver, i.e. molecules that are released earlier arrive \mbox{late~--~yielding} so-called transpositions\footnote{In \cite{Yeh12}, transpositions are referred to as crossovers.} of bits or symbols~\cite{Yeh12}. Thus, for time- and type-based information encoding using individual molecules transposition errors are the dominant error source. 
The implementation of an optimal maximum likelihood (ML) detector is almost impractical, even for a short sequence of molecules, since all possible permutations must be taken into account \cite{Srinivas12,Murin17}. 
For time-based information encoding a sub-optimal detector is proposed in~\cite{Murin17}, which cannot be adopted for type-based information encoding.
For type-based information encoding the channel capacity is derived in \cite{Hsieh_13}, assuming only transpositions between neighboring bits. In~\cite{Haselmayr17,Ahmadzadeh17} the impact of transposition errors for diffusion-based MC with mobile transmit and receive nano-machines is investigated. Different techniques for mitigating transposition errors are considered \mbox{in~\cite{Lin15,Nakano10,Manocha16,Ko12,Shih13,Weisi16_1}}. These approaches  can be divided into three categories: Sender-oriented, environment-oriented, and receiver-oriented. The works in \mbox{\cite{Nakano10,Ko12,Shih13,Weisi16_1,Lin15}} consider sender-oriented techniques for combating transpositions. Obviously, 
increasing the bit interval duration lowers the probability of transpositions. \mbox{In~\cite{Lin15}, releasing} multiple molecules per information bit, instead of a single molecule, is presented as a promising approach. Various block coding techniques are proposed in~\mbox{\cite{Ko12,Shih13,Weisi16_1}}. The code design is no longer based on the Hamming distance, which is only useful if the bits are corrupted by noise and is likely to be ineffective in the case of transposition errors. Thus, other attributes, such as for example the Hamming weight~\cite{Weisi16_1} or a molecular coding distance \cite{Ko12} are considered as suitable coding design paradigms. Environment-oriented approaches for transposition error mitigation are presented in \cite{Manocha16,Nakano10}. In \cite{Nakano10} different propagation mechanisms are investigated (diffusion, diffusion with amplification, diffusion in fixed volume space, motor-driven diffusion) and in \cite{Manocha16} a Dielectrophoresis-based relay system is proposed to maintain in-sequence delivery of molecules. Such a system converts the random diffusion into a controlled and guided drift by collecting molecules on electrodes and relaying them at a controlled interval. In \cite{Nakano10} buffering at the receiver is proposed as a receiver-oriented approach for combating transpositions.

Although the works mentioned above consider different aspects of the transposition effect a comprehensive performance analysis, especially for diffusion-based channels without drift, is lacking in the current literature.

\subsection{Contributions}
In this work, we investigate the impact of transpositions on the performance of diffusion-based MC systems with and without drift. We employ type-based information encoding, releasing a single molecule per information bit and the asynchronous detection algorithm exploits the arrival order of the molecules. We present an analytical expression for the exact bit error probability (BEP), which takes all possible permutations into account. Since evaluating the exact BEP expression is only feasible for short sequences, we also derive computationally tractable approximations of the BEP. For pure diffusion channels, we derive a tight upper bound based on the probability of out of order sequence delivery and the average BEP over these permutations. For drift channels, we approximate the BEP based on the observation that transpositions between neighboring bits are dominant. To the best of our knowledge, neither an exact BEP nor a tight bound for pure diffusion channels has been reported in the existing literature. For drift channels, we obtain the same results as in \cite{Lin15}, but we applied an alternative derivation, revealing some new interesting insights. Additionally, we provide a theoretical BEP analysis when background noise is not negligible and derive the optimal bit interval that minimizes the BEP. Through numerical results we show the error and goodput performance with and without background noise for different parameters and confirm the theoretical results. It is important to note that the presented study is important for all applications which require in-sequence delivery of single molecules, e.g., diagnostic (in-vitro) \cite{Gascoyne02} and drug assessment systems \cite{Kang08}.


\subsection{Organization}
The rest of the paper is organized as follows: Section~\ref{sec:sys_mod} presents the system model and the considered propagation environment. In Section~\ref{sec:trans_effect} we first derive the exact and approximate BEP caused by transpositions and then we investigate the BEP when background noise is not negligible. In Section \ref{sec:sim_res} we show the error and goodput performance with and without background noise through numerical results. Finally, Section~\ref{sec:concl} provides concluding remarks.


\section{System Model}
\label{sec:sys_mod}

We consider a semi-infinite one-dimensional (1D) fluid environment, whereby the length of propagation is large compared to width dimensions (e.g., blood vessels).
We assume constant temperature $T_\text{a}$ and viscosity~$\eta$. A point transmitter (TX) and a point receiver (RX) are placed at a distance $d$. 
We consider the transmission of $K$ information bits $\ve{b} = [b_1 ,...,b_K]$, where $b_k \in \{0,1\}$ denotes the transmitted bit in the $k$th bit interval. We employ 
binary molecule shift keying \cite{Farsad16}, which maps bit~0 or bit~1 to a single molecule of type-$a$ or type-$b$, respectively. Both molecule types have the same diffusion coefficient $D$. The molecules $\ve{m} = [m_1 ,...,m_K]$, with $m_k \in \{a,b\}$, are released at time
\begin{align}
  X_k = (k-1)T, \quad k = 1,\ldots, K,
  \label{eq:release_time_k}
\end{align}

where $T$ denotes the duration of the bit interval. Each molecule propagates independently from others in the environment based on a specific propagation mechanism (e.g., Brownian motion with positive drift). Similar to \cite{Nakano10}, we assume no collisions among the information molecules. At the RX, a fully absorbing receiver detects the type of the molecule and removes it from the environment. The arrival time of a molecule released at time $X_k$ is given by
\begin{align}
  Y_k = X_k + Z_k, \quad k = 1, \ldots, K,
  \label{eq:arrive_time_k}
\end{align}
where $Z_k$ denotes the random propagation time of a molecule until the first arrival, which is referred to as first hitting time. The received molecules  \mbox{$\hat{\ve{m}} = [\hat{m}_1,...,\hat{m}_K]$} are collected based on their arrival order; for example, two released molecules $m_1$ and $m_2$ that are received out of order, i.e. \mbox{$Y_1 > Y_2$},  results in $\hat{\ve{m}} = [m_2,m_1]$. Hence, we consider an asynchronous detection, which requires no synchronization between TX and RX \cite{Hsieh_13}. Finally, the received molecules are mapped to the estimated bit sequence \mbox{$\ve{\hat{b}} = [\hat{b}_1 ,...,\hat{b}_K]$}. 


\subsection{Propagation Environment}
\label{subsec:prop_env}

We consider two mechanisms by which the molecules propagate from the TX to the RX inside the fluid medium: 

\subsubsection{Brownian motion without drift}
The first hitting time of diffusion-based channels without drift follows a L\'{e}vy distribution, with its probability density function (PDF) given by~\cite{Farsad15}
\begin{align}
  f_Z(z) & = \sqrt{\frac{c}{2\pi z^3}}\exp\left(-\frac{c}{2z}\right),  \quad z > 0,
  \label{eq:levy_pdf}
\end{align}

and its cumulative distribution function (CDF) can be expressed as
\begin{align}
  F_Z(z) & = \text{erfc}\left(\sqrt{\frac{c}{2z}}\right),  \quad z > 0,
  \label{eq:levy_cdf}
\end{align}

with the scale parameter \mbox{$c=d^2/(2D)$}. The complementary error function $\text{erfc}(x)$ is defined by \mbox{$\text{erfc}(x) = 2/\sqrt{\pi}\int_x^\infty \exp(-t^2) \text{d}t$} and 
$d$ denotes the distance between TX and RX. The diffusion coefficient of the released molecules is given by

\begin{align}
  D = \frac{k_\text{B} T_\text{a}}{6 \pi \eta r},
  \label{eq:diff_coeff_def}
\end{align}
where $k_\text{B} = 1.38\times 10^{-23}\,\text{J/K}$ corresponds to the Boltzmann constant and $r$ denotes the radius of the molecules, respectively.

\subsubsection{Brownian motion with drift}
The first hitting time of diffusion-based channels with positive drift follows an inverse Gaussian distribution with its PDF given by~\cite{Srinivas12}
\begin{align}
  f_Z(z) & = \sqrt{\frac{\lambda}{2 \pi z^3}}\exp\left(-\lambda \frac{(z-\mu)^2}{2\mu^2 z}\right),  \quad z > 0,
  \label{eq:ig_pdf}
\end{align}

and its CDF can be expressed as
\begin{align}
  F_Z (z) & =  \phi\left(\sqrt{\frac{\lambda}{z}} \left(\frac{z}{\mu}\rmv - \rmv 1\right)\right)  \nonumber \\
                           & + \exp\left(\frac{2\lambda}{\mu}\right)\rmv\phi\left(-\sqrt{\frac{\lambda}{z}} \left(\frac{z}{\mu}\rmv + \rmv 1\right)\right), \quad z > 0,
  \label{eq:ig_cdf}
\end{align}

with mean~$\mu = d/v$, the shape parameter \mbox{$\lambda = d^2/(2D)$} and the CDF of the standard normal distribution  $\phi(x) = 1/\sqrt{2\pi} \int_{-\infty}^x \exp(-t^2/2)\text{d}t$. The positive drift velocity from TX to RX is denoted by~$v$. In case of no drift, i.e. $v=0\,\text{m/s}$, the inverse Gaussian distribution turns into a L\'{e}vy distribution. \newline

It is important to note that for diffusion-based channels without drift the first hitting time in a 3D environment with a spherical absorbing receiver can be modeled by a L\'{e}vy distribution with a scaling parameter \cite{Yilmaz14}. However, for diffusion-based channels with drift, to the best of our knowledge, no closed-form expression for the first hitting time probability in a 3D environment has been derived so far.

\revMath{
\subsection{Background Noise}
\label{subsec:bg_noise}
In diffusion-based MC it is very likely that the RX does not only capture molecules from the corresponding TX, but also from other sources (environment and/or other TXs). We refer to these unintended molecules as background noise. Similar to~\cite{Lin15}, we consider the following assumptions: (i) The number of unintended molecules in disjoint time intervals are independent; (ii) The number of captured unintended molecules does not favor any time instant; (iii) No two unintended molecules are captured at exactly the same time. If these conditions are fulfilled then the  number of unintended molecules captured by the RX can be modeled by a Poisson random process with rate $\lambda$. Thus, the inter-arrival time of the unintended molecules follows an exponential distribution with mean $\lambda^{-1}$.
}

\section{Bit Error Probability Analysis}
\label{sec:trans_effect}

Due to the random arrival time of the molecules, they may arrive out of order at the RX and, thus, transpositions occur~\cite{Yeh12}. In this section, we study 
the BEP caused by transposition errors. We present an analytical expression for the exact BEP and derive computationally tractable approximations of the BEP 
by exploiting the different channel properties. Based on these results, we analyze the BEP when background noise is not negligible.

\subsection{Exact BEP without Background Noise}
\label{subsec:exact_bep}

We divide the derivation of the exact BEP into two steps: First, we determine the probability that a particular permutation is received when a sequence of $K$ molecules is released. Then, we 
derive the BEP caused by each permutation. 

Let $\mathcal{P}_K$ represent the set of all possible permutations on $K$ released molecules.
\begin{align}
  \mathcal{P}_K = \{\pi_0, \pi_1, \ldots, \pi_M\},
\end{align}

with $M = K!$ and $\pi_0$ denotes permutation where the order of the released molecules does not change. Moreover, we denote $\pi_m(i)$ as the $i$th element of the permutation $\pi_m$~(e.g., \mbox{$\pi_0(i) = i\ \forall i$}). The probability of receiving a particular permutation~$\pi \in \mathcal{P}_K$, given that $\pi_0$ was released, can be described as 
\begin{align}
  \Pr[\pi, T|\pi_0] & = \Pr[Y_{\pi(1)} < Y_{\pi(2)} < \ldots < Y_{\pi(K)}],
  \label{eq:perm_prob_1}
\end{align}

where $Y_{\pi(i)} = X_{\pi(i)} +  Z_{\pi(i)}$, with $X_{\pi(i)} = \pi^\ast(i)T$ and $\pi^\ast(i) = (\pi(i) - 1)$. Hence, the arrival probability of a permutation $\pi \in \mathcal{P}_K$ can be written as
\begin{align}
  \Pr[\pi, T|\pi_0] 
                   \rmv = & \rmv \Pr[\pi^\ast(1)T \rmv + Z_{\pi(1)} \rmv < \rmv\pi^\ast(2)T + Z_{\pi(2)}\,, \ldots,   \nonumber \\
                        & \pi^\ast(K\rmv-\rmv 1)T \rmv+ \rmv Z_{\pi(K-1)} \rmv< \rmv\pi^\ast(K)T \rmv+ \rmv Z_{\pi(K)}].
  \label{eq:perm_prob_2}
\end{align}

Similar to \cite{Nakano10}, we assume that that the molecules propagate independently from each other and, thus, \eqref{eq:perm_prob_2} can be determined by
\begin{align}
  \Pr[\pi, T|\pi_0]  = & \int\limits_{0}^{\infty} f_Z(z_{\pi(K)}) \hspace*{-7mm} \int\limits_{0}^{z_{\pi(K)}+ \Delta \pi(K)T} \hspace*{-7mm} f_Z(z_{\pi(K-1)}) \cdots \nonumber \\
                \times & \hspace*{-7mm} \int\limits_{0}^{z_{\pi(3)}+ \Delta \pi(3)T} \hspace*{-7mm} f_Z(z_{\pi(2)}) \hspace*{-6mm} \int\limits_{0}^{z_{\pi(2)}+ \Delta \pi(2)T} \hspace*{-6mm} f_Z(z_{\pi(1)}) 
                   \text{d}z_{\pi(1)} \cdots \text{d}z_{\pi(K)},
  \label{eq:perm_prob_3}
\end{align}

with $\Delta \pi(i) = \pi(i) - \pi(i-1)$ and $f_Z(z)$ denotes the PDF of the L\'{e}vy and the inverse Gaussian distribution defined in \eqref{eq:levy_pdf} and \eqref{eq:ig_pdf}, respectively.

The BEP of a particular permutation can be derived by
\begin{align}
  \Pr[\hat{b}_k \neq b_k|\pi] = \frac{\text{disp}(\pi)}{2 K}, \quad \pi \in \mathcal{P}_K.
  \label{eq:perm_bit_err}
\end{align}

The function $\text{disp}(\cdot)$ computes the number of displacements of a particular permutation $\pi \in \mathcal{P}_K$ and is defined by
\begin{align}
  \text{disp}(\pi) =\sum_{i=1}^K\left\lceil \frac{|\pi(i) - i|}{K} \right \rceil,
  \label{eq:disp_perm}
\end{align}

where the ceiling function $\lceil x \rceil$ denotes the mapping to the smallest following integer number.

Combining \eqref{eq:perm_prob_1} and \eqref{eq:perm_bit_err} gives the  BEP due to transposition errors for the transmission of $K$ bits
\begin{align}
  P_\text{t}^\text{e} = \Pr[\hat{b}_k \neq b_k] = \sum\limits_{\pi \in \mathcal{P}_K} \Pr[b_k \neq \hat{b}_k|\pi] \Pr[\pi, T|\pi_0] .
  \label{eq:tot_bit_err}
\end{align}

However, the evaluation of \eqref{eq:tot_bit_err} is only feasible for small block sizes $K$, since $|\mathcal{P}_K|= K!$. Hence, in the next section we present computationally tractable approximations
of the BEP.

The following example shows the evaluation of the exact BEP for the transmission of $K=3$ bits.
\begin{figure}[t!]
  \begin{center}
    \includegraphics[scale = 1.1]{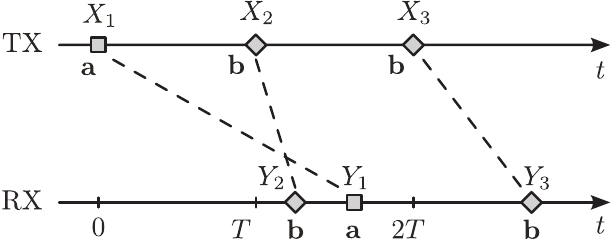}
  \end{center} 
  \caption{Illustration of the transposition effect. Three molecules of type $a$, $b$ and~$b$ are released at times $X_1$, $X_2$ and $X_3$ and arrive at times $Y_1$, $Y_2$ and~$Y_3$, with $Y_2 < Y_1 < Y_3$. Thus, the first and the second molecule are received out of order.}
  \label{fig:transp_effect}
\end{figure}

\begin{Example}
We consider the transmission of $K=3$ information bits $\ve{b} = [b_1, b_2, b_3]$ with $b_k\in\{0,1\}$. Thus, the molecules $\ve{m} = [m_1, m_2, m_3]$, with $m_k \in \{a,b\}$ are released at time $X_1 = 0$, $X_2 = T$, and $X_3 = 2T$. For $b_k = 0$ type-$a$ molecules ($m_k = a$) and for $b_k = 1$ type-$b$ molecules ($m_k = b$) are released, respectively. The released molecules propagate through the environment according to the mechanisms described in Section \ref{subsec:prop_env}. We assume that they arrive at the receive in the order $\hat{\ve{m}} = [m_2, m_1, m_3]$ as illustrated in Fig. \ref{fig:transp_effect}. For this permutation  $\pi(1) = 2$, $\pi(2) = 1$ and $\pi(3) = 3$ and  $Y_2 < Y_1 < Y_3$ holds. According to~\eqref{eq:perm_prob_3}, the probability to observe such a permutation at the receiver is given by 
\begin{align*}
  \Pr[\pi, T|\pi_0] = &\Pr[Y_{2} < Y_{1} < Y_{3}] \\
                    = & \int\limits_{0}^{\infty} f_Z(z_3) \int\limits_{0}^{z_3+2T} f_Z(z_1)  \int\limits_{0}^{z_1-T} f_Z(z_2) \text{d}z_2\text{d}z_1\text{d}z_3.
\end{align*}

As shown in Fig. \ref{fig:perm_ber}, a permutation does not directly translate to bit errors. For the  permutation $\hat{\ve{m}} = [m_2, m_1, m_3]$ no error occurs if 
the released molecules $m_1$ and $m_2$ are of the same type, i.e. $m_1=m_2 =m$ with $m \in \{a,b\}$. According to~\eqref{eq:perm_bit_err}, the BEP for the considered permutation $\pi$ is given by 
\begin{align*}
  \Pr[\hat{b}_k \neq b_k|\pi] = \frac{1}{3},
\end{align*}

with 
\begin{align*}
  \text{disp}(\pi) = \left \lceil \frac{|2 - 1|}{3} \right \rceil + \left \lceil \frac{|1 - 2|}{3} \right \rceil +  \left \lceil \frac{|3 - 3|}{3} \right \rceil = 2.
\end{align*}

\end{Example}

\begin{figure}[t!]
  \begin{center}
    \includegraphics[scale = 1]{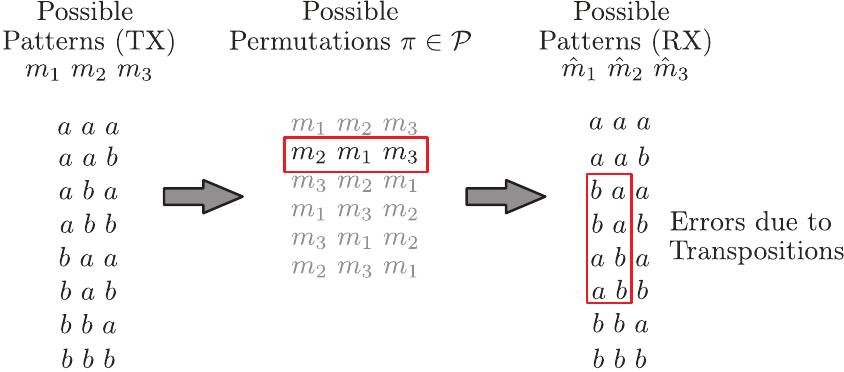}
  \end{center} 
  \caption{Relation between permutation and bit error. If the first and the second molecule arrive out of order, i.e $[m_1, m_2, m_3] \rightarrow [m_2, m_1, m_3]$, an error occurs only if the released molecules $m_1$ and $m_2$ are of different type.}
  \label{fig:perm_ber}
\end{figure}

\revMath{
\subsection{Approximate BEP without Background Noise}
\label{subsec:appr_bep}
We approximate the exact BEP given in \eqref{eq:tot_bit_err} by exploiting the different properties of pure diffusion and drift channels. For drift channels, we assume 
that transpositions between neighboring bits are dominant and transpositions between bits that are more than one bit interval apart are negligible \cite{Lin15}. This assumption holds for $T > d/v$, which means that the average time of the molecules to arrive at the RX $\mu = d/v$ is smaller than the duration of one bit interval~$T$. Unfortunately, this assumption does not hold for pure diffusion channels, since in this case the average arrival time of the molecules is infinity. For pure diffusion channels, we approximate the BEP based on the probability of out of order sequence delivery and the average BEP over these permutations.

\subsubsection{Diffusion-Based Channel Without Drift}

We define a lower bound for the arrival probability of the permutation $\pi_0$, corresponding to in-sequence delivery of the released molecules. For this, we assume that all molecules arrive within the finite observation interval $KT$ and, thus, the lower bound can be expressed as
\begin{align}
  \Pr[\pi_0, T|\pi_0]  \rmv & \geq  \rmv \Pr[Z_1 \rmv < \rmv KT, Z_2 \rmv < \rmv (K-1)T, \ldots, Z_{K} \rmv< \rmv T] \nonumber \\ 
                            & = \int\limits_{0}^{T} f_Z(z_K) \cdots \hspace*{-4mm} \int\limits_{0}^{(K-1)T} \hspace*{-3mm} f_Z(z_{2}) \int\limits_{0}^{KT} f_Z(z_{1})\text{d}z_1 \cdots \text{d}z_K\nonumber \\
                            & = \prod\limits_{j=1}^{K}F_Z(jT),
  \label{eq:in_seq_prob_upper}
\end{align}

%

where $F_Z(z)$ denotes the CDF of the L\'{e}vy distribution defined in \eqref{eq:levy_cdf}. 
Hence, the arrival probability of all permutations except~$\pi_0$, i.e. $\mathcal{P}_K\backslash\{\pi_0\}$, is given by 
\begin{align}
  \Pr[\mathcal{P}_K\backslash\{\pi_0\}, T|\pi_0] & = 1- \Pr[\pi_0, T|\pi_0] \nonumber \\ 
                                                 & \leq 1 -\prod\limits_{j=1}^{K}F_Z(jT).
  \label{eq:out_seq_prob}
\end{align}
Next, we derive the average BEP of the permutations in $\mathcal{P}_K\backslash\{\pi_0\}$. The following theorem allows to calculate the average value of $\text{disp}(\pi)$ in \eqref{eq:disp_perm} over all permutations $\pi \in \mathcal{P}_K$.

\begin{Theorem}
  Among all permutations of length $K \geq 1$ the average displacement is given by
  \begin{align}
    \overline{\text{disp}}(\pi) =  \frac{1}{K!}\sum\limits_{\pi \in \mathcal{P}_K} \text{disp}(\pi) = K-1,
    \label{eq:avg_disp_perm}
  \end{align}
with  $\text{disp}(\pi) =  \sum_{i=1}^K \lceil |\pi(i) - i|/K \rceil$ as defined in \eqref{eq:disp_perm}.
\end{Theorem}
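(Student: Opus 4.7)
The plan is to first simplify the integrand-like expression $\lceil |\pi(i)-i|/K\rceil$ by noting that for any permutation $\pi$ of $\{1,\dots,K\}$ and any index $i\in\{1,\dots,K\}$, we have $0 \le |\pi(i)-i| \le K-1 < K$. Consequently the ceiling is either $0$ or $1$, and it evaluates to $0$ exactly when $\pi(i)=i$ and to $1$ otherwise. This reduces $\text{disp}(\pi)$ to a purely combinatorial quantity: it is the number of positions at which $\pi$ disagrees with the identity, i.e.\ the number of non-fixed points of $\pi$.

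Next, I would compute the average of this quantity over the symmetric group $\mathcal{P}_K$ by linearity. Introducing indicator variables $\chi_i(\pi) = 1 - \mathbf{1}\{\pi(i)=i\}$, we have
\begin{equation*}
\text{disp}(\pi) \;=\; \sum_{i=1}^{K} \chi_i(\pi),
\end{equation*}
so
\begin{equation*}
\overline{\text{disp}}(\pi) \;=\; \frac{1}{K!}\sum_{\pi\in\mathcal{P}_K}\sum_{i=1}^{K}\chi_i(\pi) \;=\; \sum_{i=1}^{K}\Bigl(1 - \tfrac{1}{K!}\#\{\pi:\pi(i)=i\}\Bigr).
\end{equation*}
For each fixed $i$, the permutations satisfying $\pi(i)=i$ are in bijection with permutations of the remaining $K-1$ elements, hence there are $(K-1)!$ of them. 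Thus each term in the sum equals $1 - (K-1)!/K! = 1 - 1/K$, and summing over $i$ yields $K(1 - 1/K) = K-1$, which is the claimed identity.

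There is essentially no obstacle: the only non-obvious observation is that the ceiling collapses to the fixed-point indicator because $|\pi(i)-i|$ never reaches $K$, after which the result is a standard application of linearity of expectation (equivalently, the classical fact that a uniformly random permutation has, on average, exactly one fixed point). I would make sure to state the range bound on $|\pi(i)-i|$ explicitly at the start so that the reduction of the ceiling is transparent to the reader.
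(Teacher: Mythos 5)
Your proof is correct and follows essentially the same route as the paper: both swap the order of summation, use the count $(K-1)!$ of permutations sending a fixed index $i$ to a given target, and conclude that the average is $K(1-1/K)=K-1$. The only difference is cosmetic — you spell out explicitly that $|\pi(i)-i|\le K-1$ forces the ceiling to collapse to the non-fixed-point indicator, a step the paper leaves implicit.
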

\begin{proof}
  Let us assume to pick $i \in \{1,\ldots,K\}$. Then the number of permutations $\pi \in \mathcal{P}_K$ that map $i$ onto $i'$, i.e. $\pi(i)=i'$, is $(K-1)!$.
  Thus, we have 
  \begin{align}
    \sum\limits_{\pi \in \mathcal{P}_K} \left \lceil \frac{|\pi(i) - i|}{K} \right \rceil = (K-1)(K-1)!.
  \end{align}
  Now, the average displacement can be obtained as follows
  \begin{align}
     \overline{\text{disp}}(\pi) & =  \frac{1}{K!}\sum\limits_{\pi \in \mathcal{P}_K} \text{disp}(\pi) \nonumber \\
                                 & =  \frac{1}{K!}\sum_{i=1}^K \sum\limits_{\pi \in \mathcal{P}_K} \left\lceil \frac{|\pi(i) - i|}{K} \right \rceil \nonumber \\
                                 & = K-1,
  \end{align}
  with $\text{disp}(\pi) =  \sum_{i=1}^K \lceil |\pi(i) - i|/K \rceil$ as defined in \eqref{eq:disp_perm}.
\end{proof}

Replacing $\text{disp}(\pi)$ by $\overline{\text{disp}}(\pi)$ in \eqref{eq:perm_bit_err} gives the average BEP of the permutations in  $\mathcal{P}_K\backslash\{\pi_0\}$
\begin{align}
  \overline{\text{Pr}}[\hat{b}_k \neq b_k|\pi] = \frac{\overline{\text{disp}}(\pi)}{2K} = \frac{K-1}{2K}.
  \label{eq:avg_bep}
\end{align}

Now, we can define an upper bound for the BEP by using the arrival probability of the permutations in $\mathcal{P}_K\backslash\{\pi_0\}$ defined in~\eqref{eq:out_seq_prob} and their average BEP given in~\eqref{eq:avg_bep} 
\begin{align}
  P_\text{t}^\text{e} = \text{Pr}[\hat{b}_k \neq b_k] 
                                    & \leq \frac{K-1}{2K}\left(1-\prod\limits_{j=1}^{K}F_Z(jT)\right).
  \label{eq:bep_appr_pure}
\end{align}

In Sec. \ref{sec:sim_res}, we show through numerical results that \eqref{eq:bep_appr_pure} provides a tight upper bound, and, thus, is an appropriate BEP approximation for diffusion-based channels without drift.}

\begin{figure}[t!]
  \begin{center}
    \includegraphics[scale = 0.6]{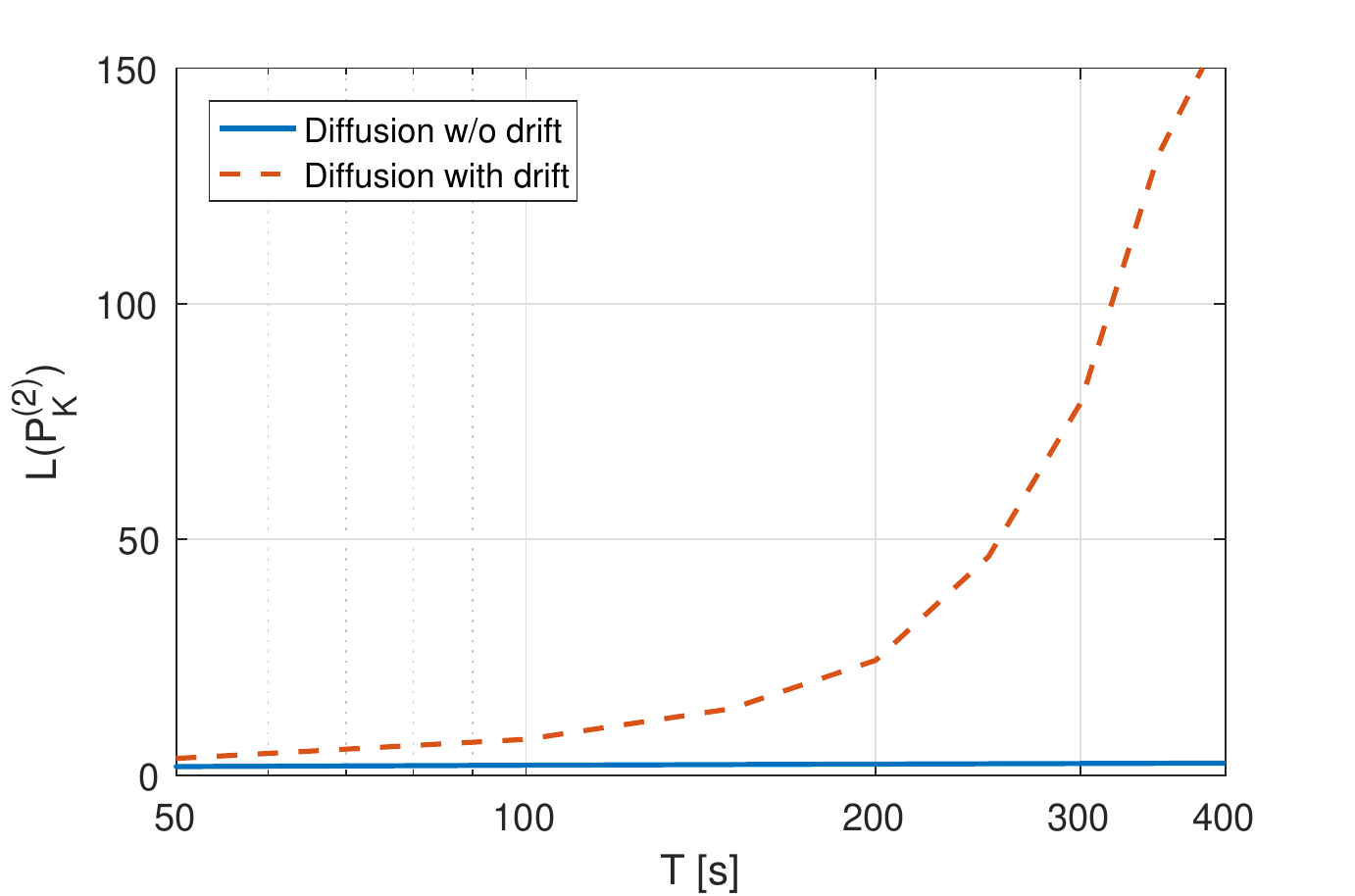}
  \end{center} 
  \caption{Ratio between the arrival probability of permutations with single and multiple transpositions as defined in \eqref{eq:ratio_arriv_prob_subset} for the simulation parameters in Tab.~\ref{tab:sim_param} with $K=4$. }
  \label{fig:PermProb_vs_T}
\end{figure}

\subsubsection{Diffusion-Based Channel With Drift}
\revMath{It can be shown, that the tight upper bound derived in the previous section also holds for drift channels. However, in this section we derive another BEP approximation, exploiting the property that in drift channels transpositions between neighboring bits are dominant. Fig. \ref{fig:PermProb_vs_T} shows the ratio between the arrival probability of permutations with a single transposition between neighboring bits and the arrival probability of permutations with multiple permutations, which can be expressed as
\begin{align}
  L(\mathcal{P}_K^{(2)}) = \frac{\Pr[\pi \in \mathcal{P}^{(2)}]}{\Pr[\pi \in \mathcal{P}_K\backslash \{\pi_0 \cup \mathcal{P}^{(2)}\}]}.
  \label{eq:ratio_arriv_prob_subset}
\end{align}

The set $\mathcal{P}_K^{(2)} \subset \mathcal{P}_K $ includes only permutations with a single transposition between neighboring bits. We observe from Fig.~\ref{fig:PermProb_vs_T} that by increasing the bit interval $T$ the ratio~$L(\mathcal{P}_K^{(2)})$ increases significantly for drift channels, but remains almost constant for pure diffusion channels. Thus, for drift channels it is very likely to receive a permutation with only a single permutation between neighboring bits instead of a permutation with multiple permutations. The probability of receiving a permutation $\pi \in \mathcal{P}_K^{(2)}$ can be expressed as\footnote{It is important to note that the probability on the right hand side of \eqref{eq:prob_single_perm} describes the arrival probability of all permutations with transposed bits $k$ and $k+1$, irrespective of the position of the other bits in the permutation sequence. However, as stated above, the arrival probability is dominated by the permutation having only the bits $k$ and  $k+1$ swapped and the other bits are received in the correct order, i.e. \mbox{$\pi \in \mathcal{P}_K^{(2)}$}.}
\begin{align}
  \Pr[\pi, T|\pi_0]  \rmv = & \Pr[Z_{k} \rmv > \rmv Z_{k+1} \rmv + \rmv T] \nonumber \\
                          = & \Pr[\Delta Z \rmv < \rmv -T], \quad k = 1,\ldots,K\rmv - \rmv 1,
  \label{eq:prob_single_perm}
\end{align}

with the random variable $\Delta Z  = Z_{k+1} - Z_{k}$,  where $Z_{k+1}$ and~$Z_{k}$ follow an inverse Gaussian distribution as defined in Sec. \ref{subsec:prop_env}. According to~\eqref{eq:perm_bit_err}, the BEP of permutations \mbox{$\pi \in \mathcal{P}^{(2)}_K$} is given by
\begin{align}
  \Pr[b_k \neq \hat{b}_k|\pi] = \frac{1}{K}, \quad \pi \in \mathcal{P}^{(2)}_K,
  \label{eq:bep_single_perm}
\end{align}

with $\text{disp}(\pi) = 2$.
Based on \eqref{eq:prob_single_perm} and \eqref{eq:bep_single_perm} the BEP for drift channels can be approximated as  follows
\begin{align}
  P_\text{t}^\text{e} = \Pr[\hat{b}_k \neq b_k] & = \sum\limits_{\pi \in \mathcal{P}_K^{(2)}} \Pr[b_k \neq \hat{b}_k|\pi] \Pr[\pi, T|\pi_0] \nonumber \\
                          & = \frac{K-1}{K}\Pr[\Delta Z \rmv < \rmv -T] \nonumber \\
                          & = \frac{K-1}{K}F_{\Delta Z}(-T),
  \label{eq:bep_appr_drift}
\end{align}

with $|\mathcal{P}_K^{(2)}|=K-1$. Moreover, $F_{\Delta Z}(\Delta z)$ denotes the CDF of the random variable $\Delta Z$ and probability \mbox{$F_{\Delta Z}(-T) = 1-F_{\Delta Z}(T)$} describes the tail probability of~$\Delta Z$. 
Unfortunately, no closed-form expressions exists for the exact distribution of~$\Delta Z$. Thus, we apply a recently proposed moment matching approximation by a the normal inverse Gaussian (NIG) distribution \cite{Haselmayr18}. The PDF of the NIG distribution is given by
\begin{align}
  f_{\Delta{Z}}(\Delta z) = & \frac{\alpha \delta}{\pi}\exp\left(\delta \sqrt{\alpha^2 - \beta^2} - \beta (\Delta z-\mu)\right) \nonumber \\
           & \times \frac{K_1\left(\alpha\sqrt{\delta^2 + (\Delta z-\mu)^2}\right)}{\sqrt{\delta^2 + (\Delta z-\mu)^2}},
  \label{eq:nig_pdf}
\end{align}

where $K_1(\cdot)$ denotes the  modified Bessel function of the third kind with index $1$ and the parameters are defined by \cite{Haselmayr18}
\begin{equation}
\begin{aligned}
  \alpha & = \frac{1}{\sqrt{20}}\frac{v^2}{D}, \quad
  &\beta &  = 0, \\
  \mu  & = 0, \quad
  & \delta &  = \frac{2}{\sqrt{5}}\frac{d}{v}.
\end{aligned}
\label{eq:nig_moments_spec1}
\end{equation}
}

\revMath{
\subsection{Approximate BEP with Background Noise}
\label{subsec:bep_bg_noise}
We approximate the BEP caused by transpositions and background noise by using the union bound
\begin{align}
  P^\text{e} \leq P_\text{t}^\text{e} + P_\text{n}^\text{e},
  \label{eq:bep_trans_bg}
\end{align}

where $P_\text{t}^\text{e}$ denotes to the approximate BEP due to transpositions (cf. \eqref{eq:bep_appr_pure} and \eqref{eq:bep_appr_drift}). The BEP caused by background noise is denoted 
by $P_\text{n}^\text{e}$ and given by \cite{Lin15}
\begin{align}
   P_{n}^\text{e} & = \frac{1}{2}\left[\gamma(1,\xi T) - \frac{1}{\xi T} \gamma(2,\xi T)\right],
\end{align}

with $\xi = 2\lambda K$ and the lower incomplete Gamma function $\gamma(s,x)$. Exploiting some properties of the incomplete Gamma function \cite{Apelblat83}, i.e. \mbox{$\gamma(1,\xi T) = 1 - \exp(-\xi T)$} and \mbox{$\gamma(2,\xi T) = \gamma(1,\xi T) - \xi T \exp(-\xi T)$}, the BEP expression due to background noise simplifies to
\begin{align}
  P_{n}^\text{e} & = \frac{1}{2}\left[1 -  \frac{1}{\xi T} +  \frac{\exp(-\xi T)}{\xi T}\right].
\end{align}

If the background noise is negligible the errors due to transpositions can be reduced by increasing the bit interval~$T$~\mbox{(cf. Sec. \ref{sec:sim_res})}. Unfortunately, if background noise is considered larger bit intervals increases 
the number of unintended molecules captured by the RX, which increases the BEP. Hence, there exists an optimal bit interval $T_\text{opt}$ that minimizes the BEP, i.e.
\begin{align}
  T_\text{opt} = \mathop{\text{argmin}}_{T} P^\text{e}(T).
\end{align}

In order to derive the optimal bit interval $T_\text{opt}$ we solve the following equation
\begin{align}
  \frac{\partial P^\text{e}}{\partial T} = \frac{\partial P_\text{t}^\text{e}}{\partial T} + \frac{\partial P_\text{n}^\text{e}}{\partial T} = 0,
\end{align}

where we assumed a tight union bound, i.e. $P^\text{e} \approx P_\text{t}^\text{e} + P_\text{n}^\text{e}$. The first derivative of $P_\text{n}^\text{e}$ can be calculated as follows
\begin{align}
  \frac{\partial P_\text{n}^\text{e}}{\partial T} = \frac{\exp(-\xi T)\left[-1 - \xi T + \exp(\xi T)\right]}{2 \xi T^2}.
  \label{eq:deriv_bg_noise}
\end{align}

For the first derivative of $P_\text{t}^\text{e}$ we use the approximate BEP expression for pure diffusion and drift channels given in~\eqref{eq:bep_appr_pure} and~\eqref{eq:bep_appr_drift}, respectively. 
For pure diffusion channels, the first derivative can be calculated by applying the generalized product rule, which results in 
\begin{align}
  \frac{\partial P_\text{t}^\text{e}}{\partial T} = - \frac{K-1}{2K} \prod\limits_{j=1}^{K}F_Z(jT) \sum\limits_{j=1}^{K} \frac{jf_Z(jT)}{F_Z(jT)},
  \label{eq:deriv_bep_drift}
\end{align}

where $f_Z(z)$ and $F_Z(z)$ denote the PDF and CDF of the L\'{e}vy distribution as defined in \eqref{eq:levy_pdf} and~\eqref{eq:levy_cdf}.
For drift channels, the first derivative is given by
\begin{align}
  \frac{\partial P_\text{t}^\text{e}}{\partial T} = - \frac{K-1}{K} f_{\Delta Z}(-T).
  \label{eq:deriv_bep_pure}
\end{align}

where $f_{\Delta Z}(\Delta z)$ corresponds to the PDF of the NIG distribution given in \eqref{eq:nig_pdf}. 
Finally, we obtain the optimal bit interval~$T_\text{opt}$ by numerically solving 
\begin{align}
  \frac{\partial P^\text{e}(T)}{\partial T}\Big|_{T=T_\text{opt}} = 0,
  \label{eq:opt_bit_interval}
\end{align}

using \eqref{eq:deriv_bg_noise} -- \eqref{eq:deriv_bep_pure}.
}


\section{Simulation Results}
\label{sec:sim_res}

\renewcommand{\arraystretch}{1.2}
\begin{table}[t!]
  \caption{Simulation Parameters}
  \centering
  \begin{tabular}{|c|c|c|}
    \hline
    Parameter  &  Value \\
    \hline
    \hline
    Drift velocity $v$  &  $\{0,1\}\,\mu \text{m/s}$  \\
    \hline
    Block size $K$  &  $20$ \\
    \hline
    Distance $d$  &  $10\,\mu \text{m}$ \\
    \hline
    Diff. coeff $D$  &  $28.37\,\mu\text{m}^2\text{/s}$ \\
    \hline 
    Noise Generation Rate  $\lambda$ &  $1\times 10^{-6}\, s^{-1}$ \\
    \hline
  \end{tabular}
  \label{tab:sim_param}
  \vspace{-2ex}
\end{table}
\renewcommand{\arraystretch}{1.2}

In this section, we investigate the error and goodput performance under different operating conditions. \revMath{The goodput $G$ represents the number of successfully received bits per block and can be expressed as

\begin{align}
  G = (1-P^\text{e})K,
  \label{eq:goodput}
\end{align}
where $P^\text{e}$ denotes the BEP caused by transpositions and background noise as defined in \eqref{eq:bep_trans_bg}. The probability for a successful bit reception is given by $(1-P^\text{e})$ and $K$ is the block size. We also show the accuracy of the BEP approximations derived in Secs. \ref{subsec:appr_bep} and \ref{subsec:bep_bg_noise}. In the simulations, we considered a one-shot communication \cite{Murin17_2}, which means that the TX sends a block of $K$ information bits to the RX and then remains silent for a long period\footnote{For example, a nano-sensor that infrequently sends data to a RX.}. In particular, we assumed that all released molecules of a block eventually arrive at the RX (infinte lifetime) and the TX remains silent until all molecules are captured by the RX. However, we consider an finite
observation duration at the RX, i.e. the RX collects molecules up to time $KT$. Each data point in the simulations was obtained by generating $N=10^6$ uniformly distributed bits which are split up into $K$ blocks that are sent independently of each other to the RX (one-shot communication). Thus, no transpositions across the blocks occur.}\ If not otherwise stated, we used the simulation parameters in Tab.~\ref{tab:sim_param}. The diffusion coefficient $D = 28.37\,\mu\text{m}^2\text{/s}$ is obtained by \eqref{eq:diff_coeff_def}, substituting $T_a = 310\,\text{K}$ (body temperatur), venous blood viscosity $\eta = 0.004\, \text{Pa\,$\cdot$\,s}$ and molecule radius $r = 2\,\text{nm}$ \cite{Lin15}. We assumed that the radius of the molecules is negligible  compared to the distance between TX and RX, i.e. $r\rmv \ll\rmv d$. 
In all figures\footnote{For the interpretation of the references to the color in all figures, the reader is referred to the web version of this article.}, the blue curves indicate the BER results for pure diffusion channels ($v=0\,\mu \text{m/s}$) and the red curves show the BER results for drift channels ($v=1\,\mu \text{m/s}$). Moreover, the circle and star markers indicate the theoretical BEP approximation for pure diffusion and drift channels derived in Secs. \ref{subsec:appr_bep} and \ref{subsec:bep_bg_noise}.

\begin{figure}[t!]
  \begin{center}
    \includegraphics[scale = 0.7]{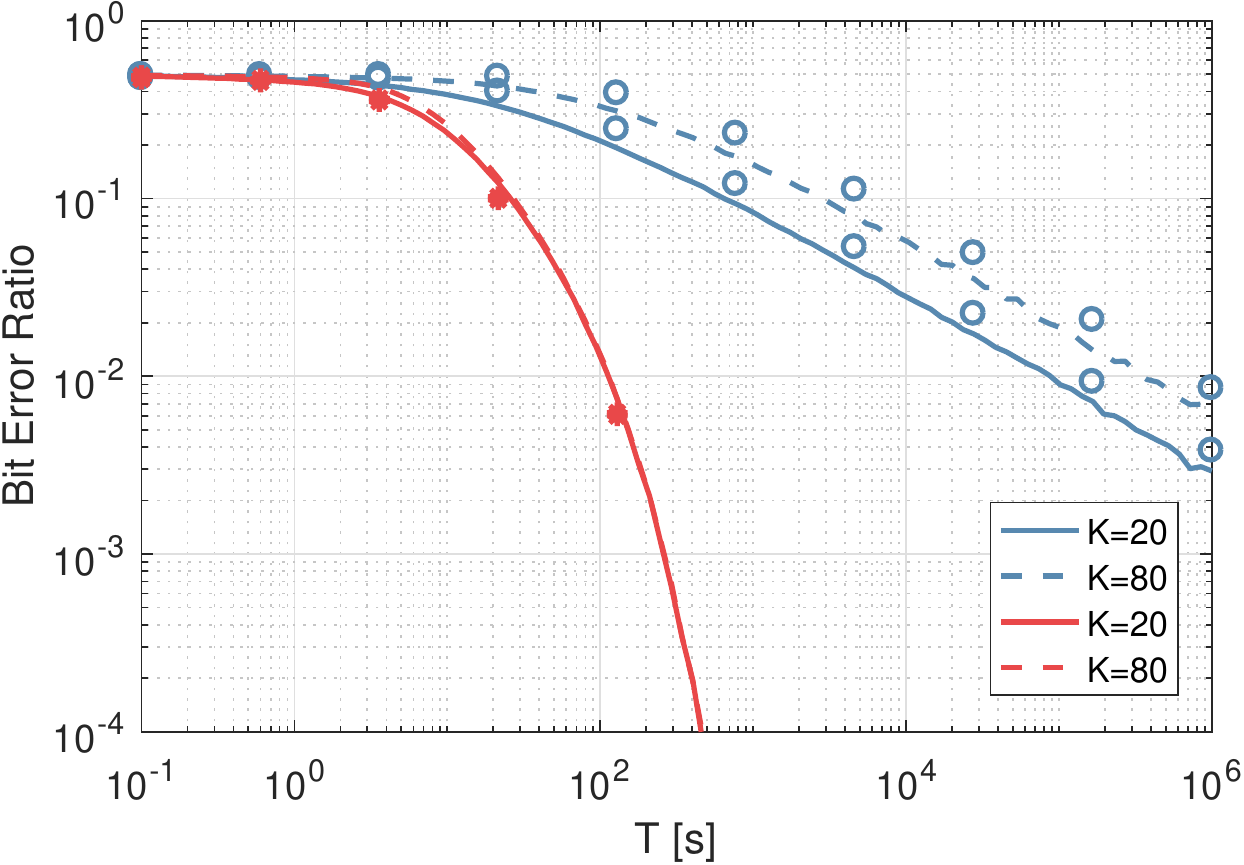}
  \end{center} 
  \vspace{-2ex}
  \caption{BER versus bit interval $T$ for different block sizes~$K$ without background noise. Blue and red curves: Simulated BER for pure diffusion and drift channels; Circle and star marker: Theoretical BEP approximation for pure diffusion and drift channels as discussed in Sec. \ref{subsec:appr_bep}.}
  \label{fig:BER_vs_T_K}
\end{figure}

\begin{figure}[t!]
  \begin{center}
    \includegraphics[scale = 0.7]{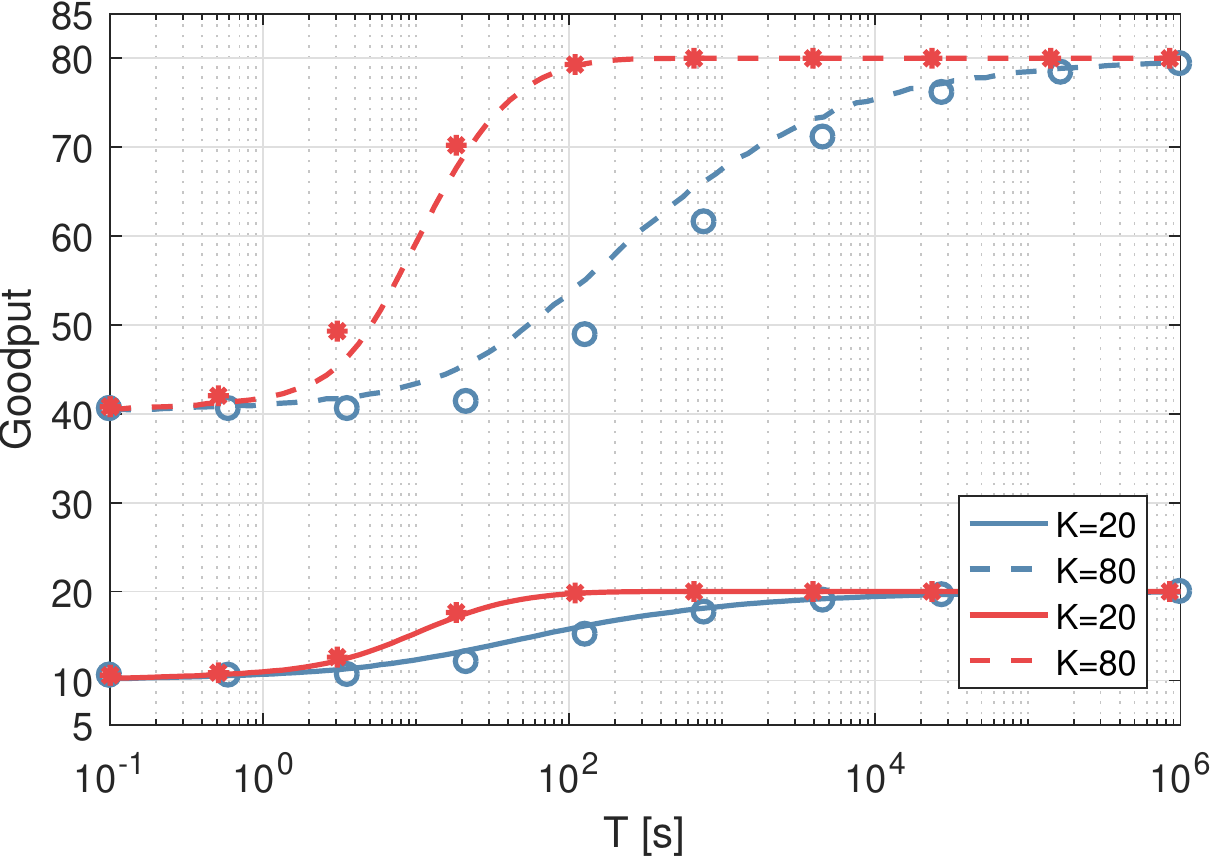}
  \end{center} 
  \vspace{-2ex}
  \caption{Goodput versus bit interval $T$ for different block sizes~$K$ without background noise. Blue and red curves: Simulated goodput for pure diffusion and drift channels; Circle and star marker: Theoretical goodput approximation for pure diffusion and drift channels as discussed in Sec. \ref{sec:sim_res}.}
  \label{fig:GP_vs_T_K}
\end{figure}

\begin{figure}[t!]
  \begin{center}
    \includegraphics[scale = 0.7]{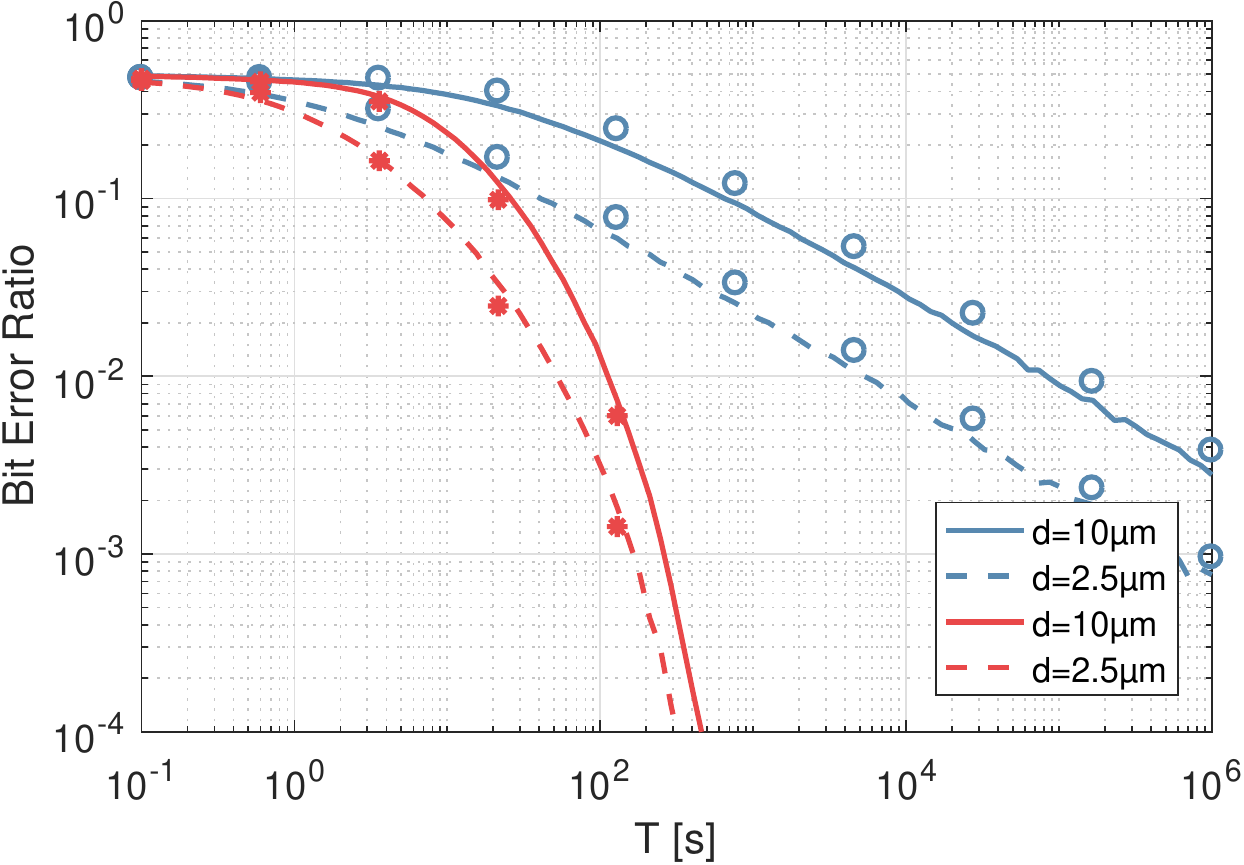}
  \end{center} 
  \vspace{-2ex}
  \caption{BER versus bit interval $T$ for different distances $d$ without background noise. Blue and red curves: Simulated BER for pure diffusion and drift channels; Circle/star marker: Theoretical BEP approximation for pure diffusion and drift channels as discussed in Sec. \ref{subsec:appr_bep}.}
  \label{fig:BER_vs_T_d}
\end{figure}

\begin{figure}[t!]
  \begin{center}
    \includegraphics[scale = 0.7]{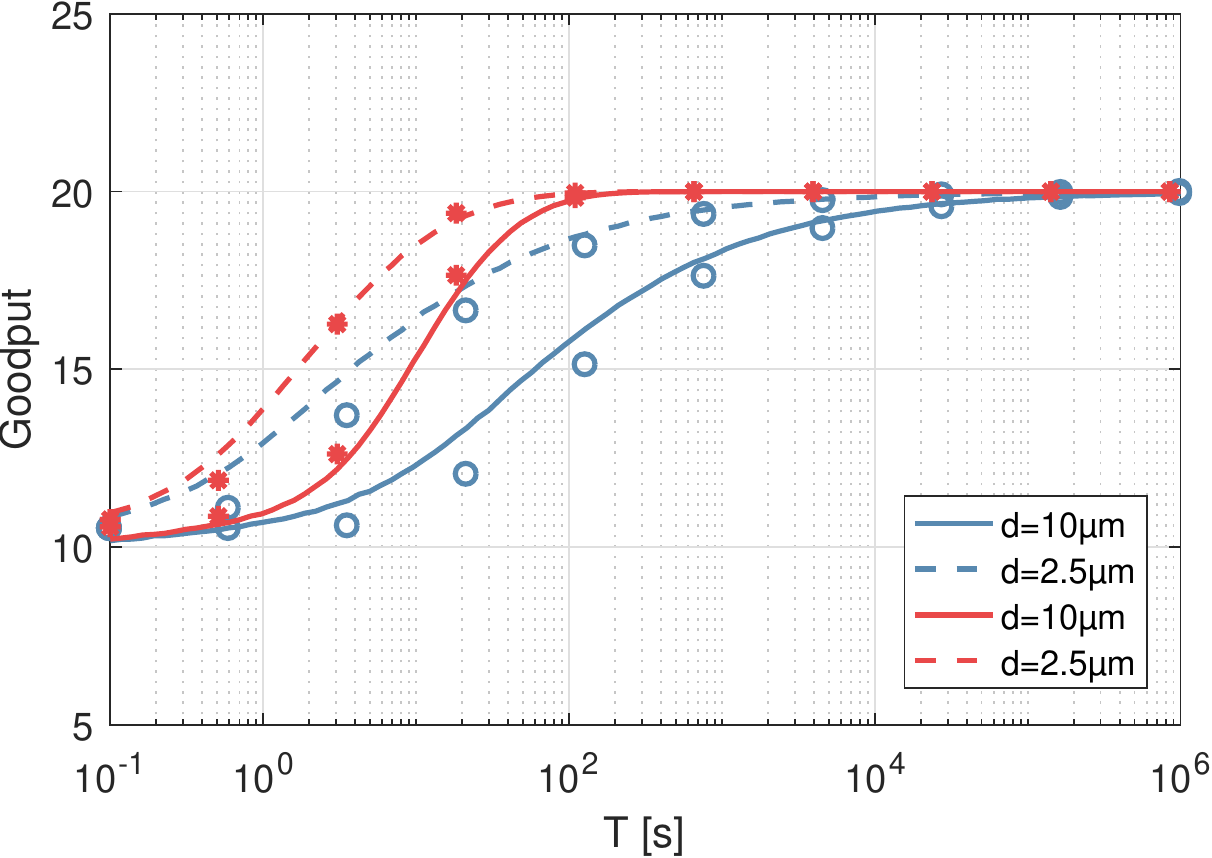}
  \end{center} 
  \vspace{-2ex}
  \caption{Goodput versus bit interval $T$ for different distances $d$ without background noise. Blue and red curves: Simulated goodput for pure diffusion and drift channels; Circle and star marker: Theoretical goodput approximation for pure diffusion and drift channels as discussed in Sec. \ref{sec:sim_res}.}
  \label{fig:GP_vs_T_d}
\end{figure}

\begin{figure}[t!]
  \begin{center}
    \includegraphics[scale = 0.7]{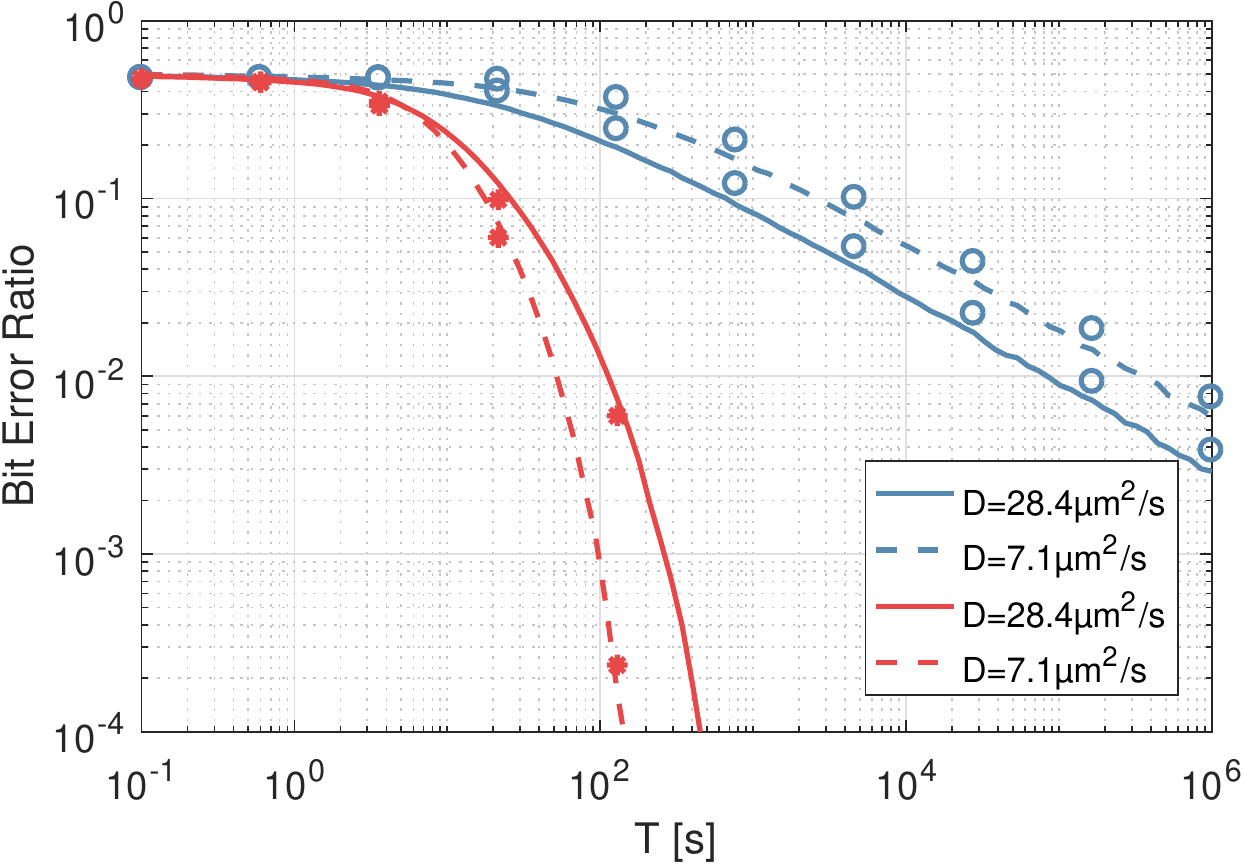}
  \end{center}
  \vspace{-2ex}
  \caption{BER versus bit interval $T$ for different diffusion coefficients $D$ without background noise. Blue and red curves: Simulated BER for pure diffusion and drift channels; Circle and star marker: Theoretical BEP approximation for pure diffusion and drift channels as discussed in Sec. \ref{subsec:appr_bep}.}
  \label{fig:BER_vs_T_D}
\end{figure}

\begin{figure}[t!]
  \begin{center}
    \includegraphics[scale = 0.7]{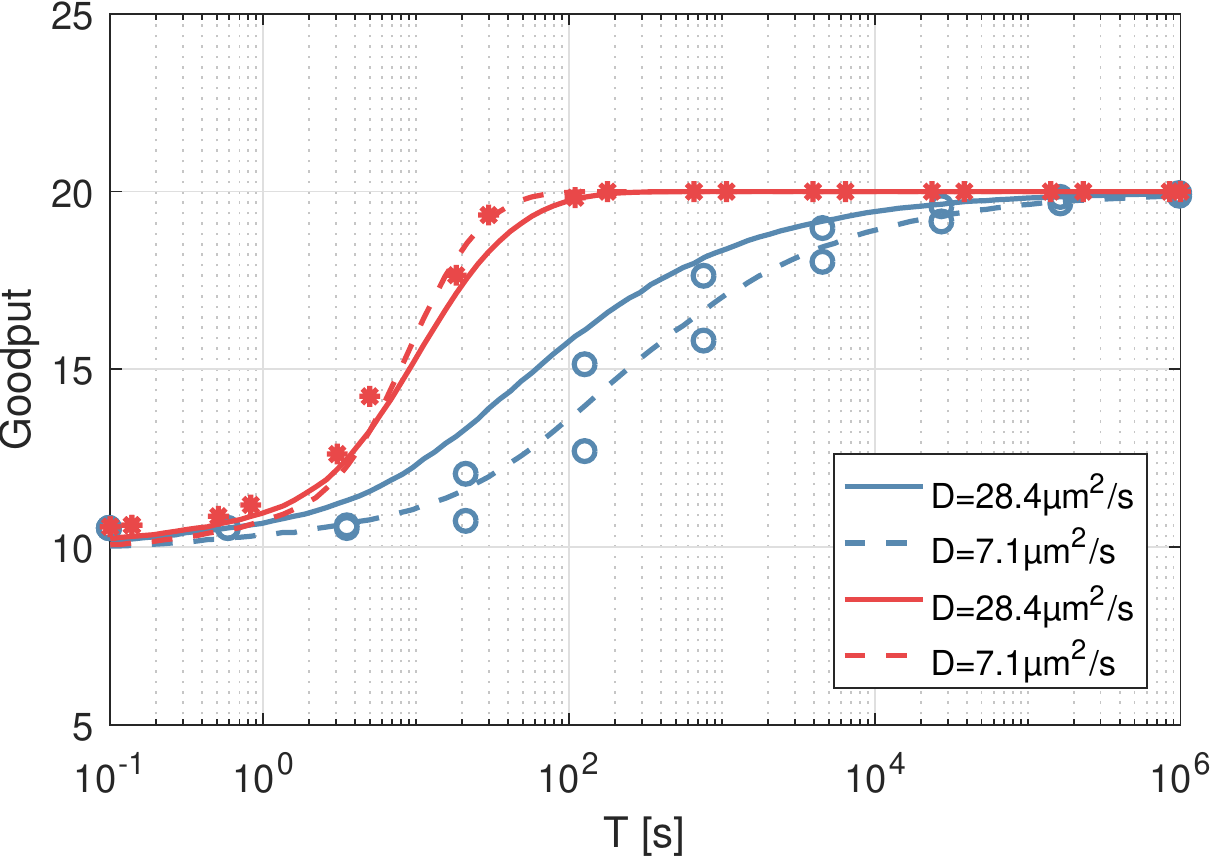}
  \end{center} 
  \vspace{-2ex}
  \caption{Goodput versus  bit interval $T$ for different diffusion coefficients $D$ without background noise. Blue and red curves: Simulated goodput for pure diffusion and drift channels; Circle and star marker: Theoretical goodput approximation for pure diffusion and drift channels as discussed in Sec. \ref{sec:sim_res}.}
  \label{fig:GP_vs_T_D}
\end{figure}

\subsection{Performance without Background Noise}
\label{subsec:perf_wo_bg_noise}
Figs. \ref{fig:BER_vs_T_K} -- \ref{fig:GP_vs_T_D} show the error and goodput performance for different blocks sizes $K$, distances $d$ and  diffusion coefficients~$D$ without background noise, i.e. $\lambda = 0\,\text{s}^{-1}$. 
We observe that the BER decreases as the bit interval $T$ increases, since this lowers the probability of transpositions. The BER for drift channels decreases significantly faster than for pure diffusion channels. This is because in drift channels single transpositions between neighboring bits are dominant for large bit intervals~$T$~\mbox{(cf. Fig. \ref{fig:PermProb_vs_T})}. Moreover, for drift channels the goodput reaches its maximum value $G=K$, i.e. all $K$ released bits are received successfully, already for smaller bit intervals compared to pure diffusion channels. Although, the BER decreases and the goodput increases as the bit interval increases, also the transmission time increases with an increasing bit interval. Thus, the bit interval allows a tradeoff between reliability and transmission time. 

Figs. \ref{fig:BER_vs_T_K} and \ref{fig:GP_vs_T_K} show the BER and goodput performance versus the bit interval~$T$ for various block sizes $K$. We observe from Fig. \ref{fig:BER_vs_T_K} that increasing the block size results in a slight loss in the BER performance for drift channels, but in a significant performance degradation for pure diffusion channels. This is because in drift channels the error performance is dominated
by single transpositions irrespective of the block size and for pure diffusion channels more transpositions occur as the block size increases. A similar observation can be made for the goodput in Fig. \ref{fig:GP_vs_T_K}. For drift channels the bit interval at which the goodput reaches its maximum value $G=K$ is almost independent of the block size, but for pure diffusion channels it increases as the block size increases.

Figs. \ref{fig:BER_vs_T_d} and \ref{fig:GP_vs_T_d} show the BER and goodput performance versus the bit interval~$T$ for various distances $d$ between TX and RX. We observe that the BER decreases as the distance decreases. 
Moreover, for small distances the goodput is higher for a certain bit interval and the maximum value $G=K$ is reached faster.

Figs. \ref{fig:BER_vs_T_D} and \ref{fig:GP_vs_T_D} show the BER and goodput performance versus the bit interval~$T$ for $D\in\{7.1,28.4\}\mu\text{m}^2\text{/s}$. The diffusion coefficient $D = 7.1\,\mu\text{m}^2\text{/s}$ is obtained by increasing the molecule radius from $r=2\,\text{nm}$ to $r=8\,\text{nm}$. Interestingly, we observe from Fig. \ref{fig:BER_vs_T_D} that for drift channels the BER decreases if the radius of the released molecules
is increased (lower diffusion coefficient) \cite{Furubayashi16}. This is because larger molecules experience more support due to drift compared to small molecules. However, for pure diffusion channels the performance decreases when the molecule radius is increased. Hence, for drift channels the goodput in a certain bit interval range can be improved by releasing larger molecules, but for pure diffusion channels this results in a degradation of the goodput.

In Figs. \ref{fig:BER_vs_T_K}, \ref{fig:BER_vs_T_d} and \ref{fig:BER_vs_T_D} we also compare the theoretical BEP approximations in \eqref{eq:bep_appr_pure} and \eqref{eq:bep_appr_drift} with the simulated BER. We show that for drift channels they match very well and we observe that the upper bound derived for pure diffusion channels is tight.
Similar observations can be made for the goodput peformance shown in Figs. \ref{fig:GP_vs_T_K}, \ref{fig:GP_vs_T_d} and~\ref{fig:GP_vs_T_D}, where the theoretical goodput approximation is obtained using~\eqref{eq:goodput}, with $P^\text{e} \approx P_\text{t}^\text{e}$ and using the BEP approximations in \eqref{eq:bep_appr_pure} and \eqref{eq:bep_appr_drift} for $P_\text{t}^\text{e}$.

\begin{figure}[t!]
  \begin{center}
    \includegraphics[scale = 0.7]{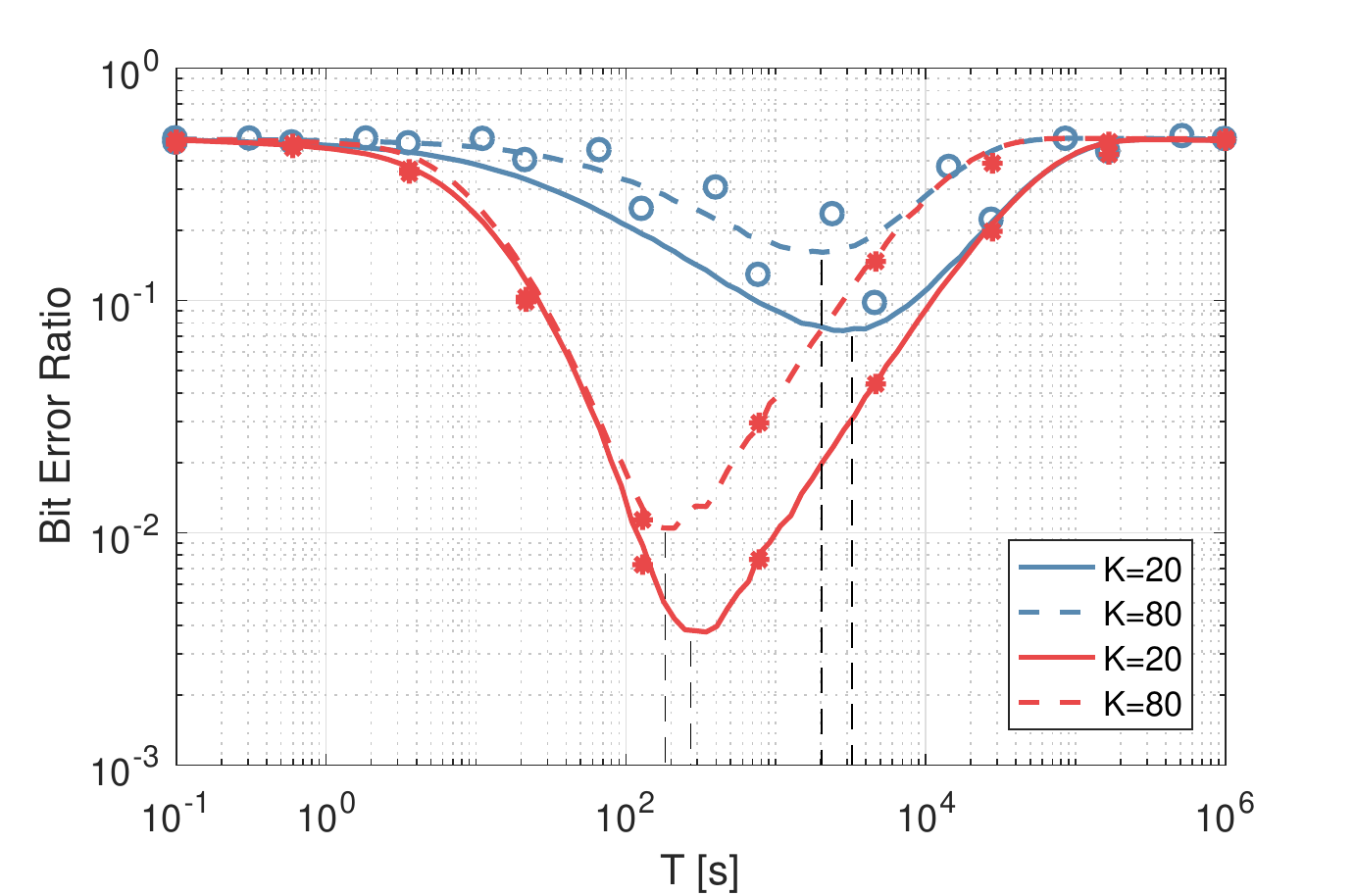}
  \end{center} 
  \vspace{-2ex}
  \caption{BER versus bit interval $T$ for different block sizes~$K$ with noise generation rate $\lambda = 1\times 10^{-6}\,s^{-1}$. Blue and red curves: Simulated BER for pure diffusion and drift channels; Circle and star marker: Theoretical BEP approximation for pure diffusion and drift channels as discussed in Secs. \ref{subsec:appr_bep} and \ref{subsec:bep_bg_noise}.The optimal bit interval~$T_\text{opt}$ derived from \eqref{eq:opt_bit_interval} for pure diffusion and drift channels is given by $\{183,271\}\,s$ and \{2022, 3212\}\,s, respectively.}
  \label{fig:BER_vs_T_K_BG}
\end{figure}

\begin{figure}[t!]
  \begin{center}
    \includegraphics[scale = 0.7]{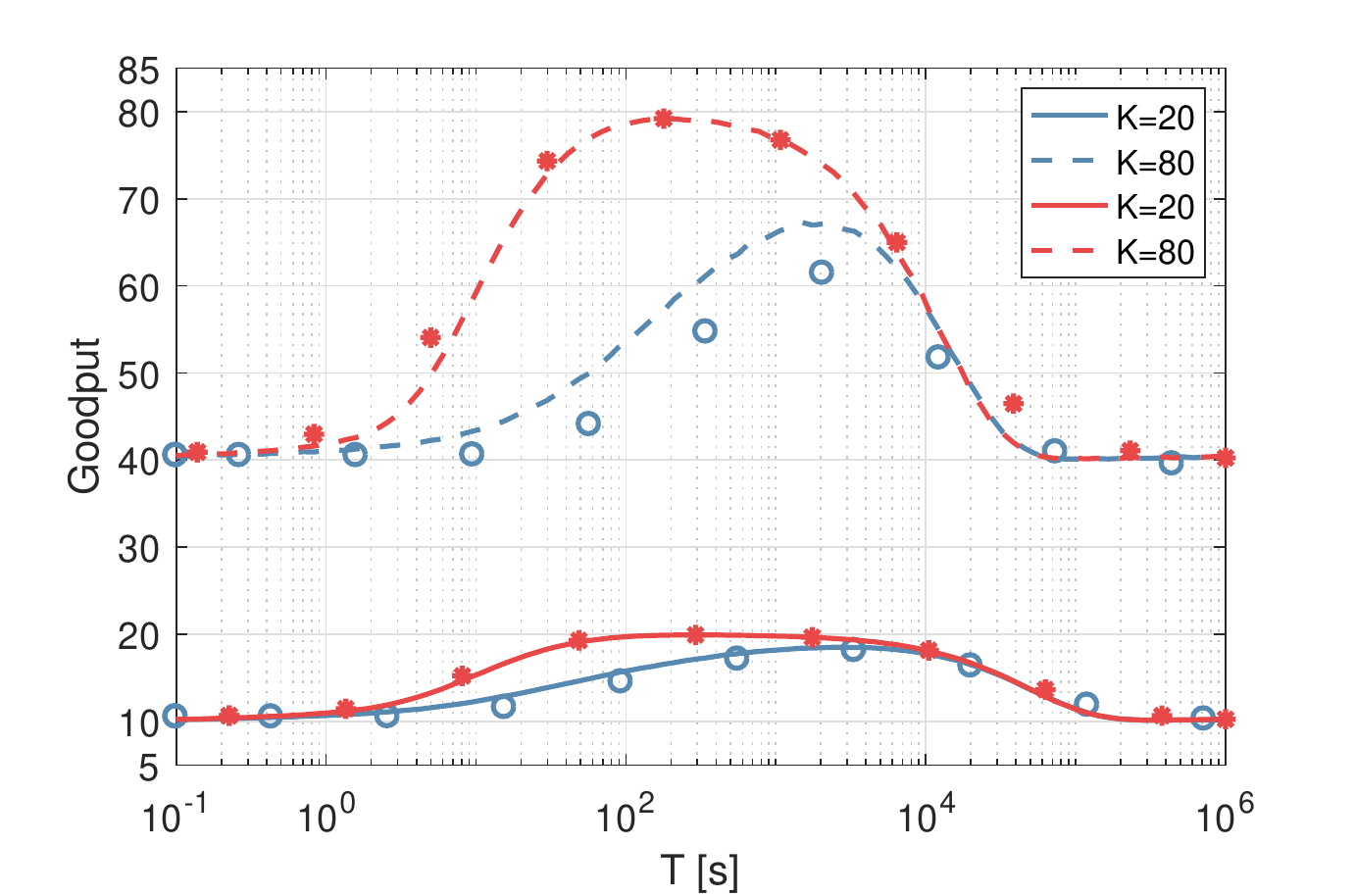}
  \end{center} 
  \vspace{-2ex}
  \caption{Goodput versus bit interval $T$ for different block sizes~$K$ with noise generation rate $\lambda = 1\times 10^{-6}\,s^{-1}$. Blue and red curves: Simulated goodput for pure diffusion and drift channels; Circle and star marker: Theoretical goodput approximation for pure diffusion and drift channels as discussed in Sec. \ref{sec:sim_res}.}
  \label{fig:TP_vs_T_K_BG}
\end{figure}

\begin{figure}[t!]
  \begin{center}
    \includegraphics[scale = 0.7]{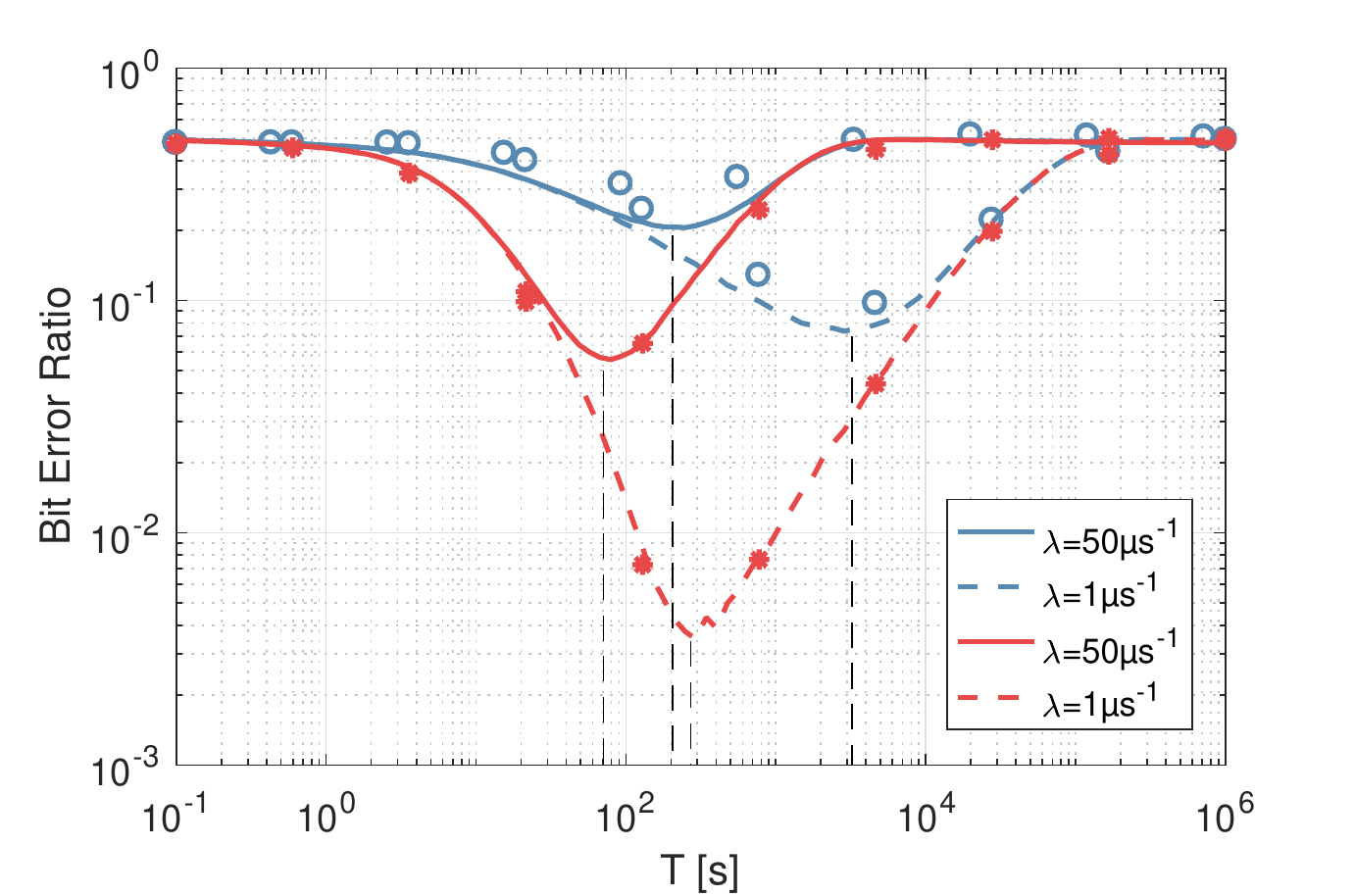}
  \end{center} 
  \vspace{-2ex}
  \caption{BER versus bit interval $T$ for different noise generation rates~$\lambda$. Blue and red curves: Simulated BER for pure diffusion and drift channels; Circle and star marker: Theoretical BEP approximation for pure diffusion and drift channels as defined in Secs. \ref{subsec:appr_bep} and \ref{subsec:bep_bg_noise}. The optimal bit interval~$T_\text{opt}$ derived from \eqref{eq:opt_bit_interval} for pure diffusion and drift channels is given by $\{71,271\}\,s$ and \{204, 3212\}\,s, respectively.}
  \label{fig:BER_vs_T_Lambda}
\end{figure}

\begin{figure}[t!]
  \begin{center}
    \includegraphics[scale = 0.7]{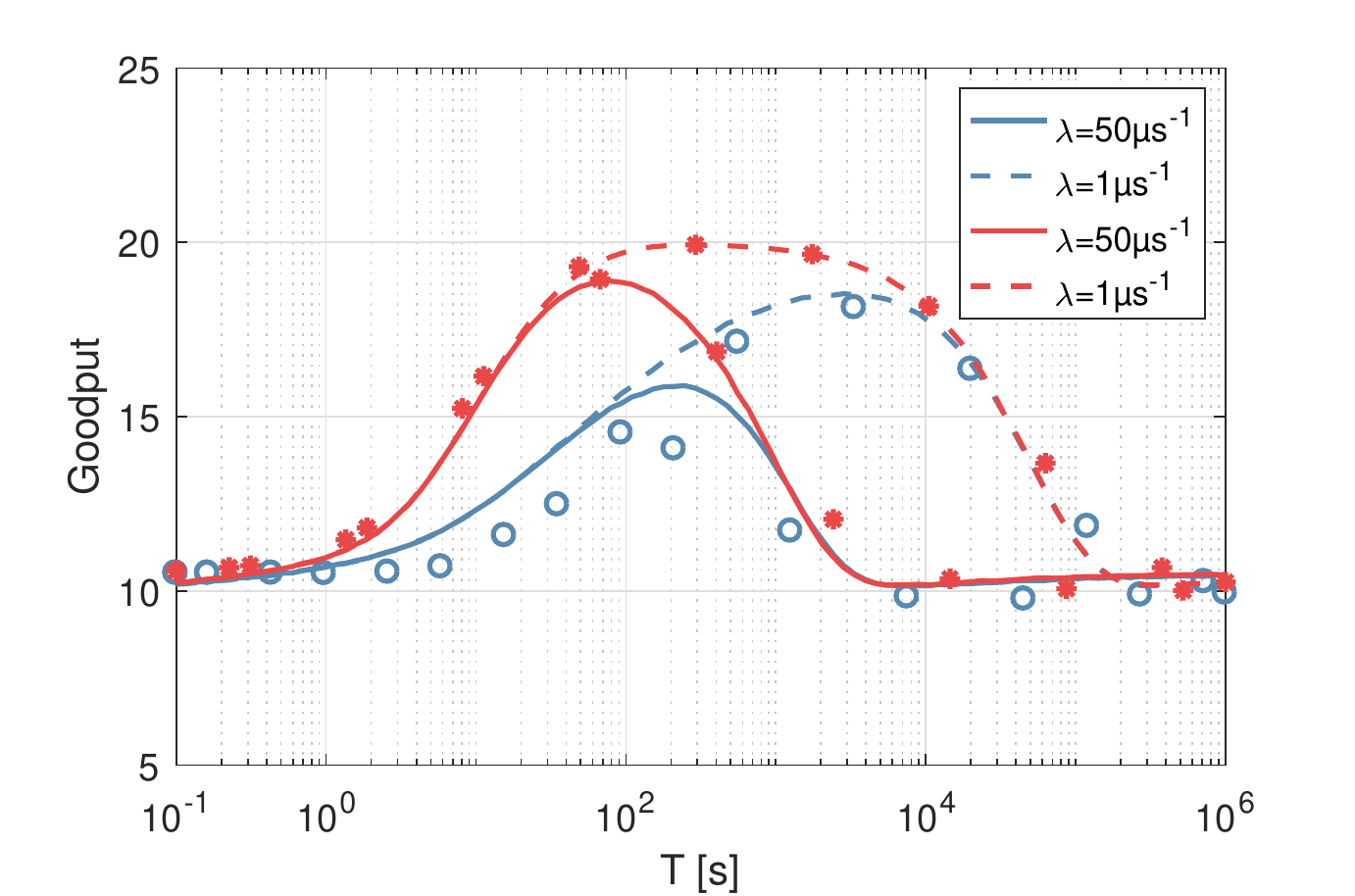}
  \end{center} 
  \vspace{-2ex}
  \caption{Goodput versus bit interval $T$ for different noise generation rates~$\lambda$. Blue and red curves: Simulated BER for pure diffusion and drift channels; Circle and star marker: Theoretical BEP approximation for pure diffusion and drift channels as defined in Sec. \ref{sec:sim_res}.}
  \label{fig:TP_vs_T_Lambda}
\end{figure}

\revMath{
\subsection{Performance with Background Noise}
\label{subsec:perf_bg_noise}
Figs. \ref{fig:BER_vs_T_K_BG} -- \ref{fig:TP_vs_T_Lambda} show the error and goodput performance when background noise is considered. Moreover, we compare the theoretical BEP approximation derived in Secs. \ref{subsec:appr_bep} and \ref{subsec:bep_bg_noise} and the theoretical goodput performance defined in~\eqref{eq:goodput} with the simulation results. We observe a good match between the theoretical and simulation results. In addition, the optimal bit interval ~$T_\text{opt}$ derived from~\eqref{eq:opt_bit_interval} is depicted in Figs.~\ref{fig:BER_vs_T_K_BG}~and~\ref{fig:BER_vs_T_Lambda}, showing a good match between the analytical and the numerical results.

Figs.~\ref{fig:BER_vs_T_K_BG} and \ref{fig:TP_vs_T_K_BG} show the BER and goodput performance versus the bit interval~$T$ for various block sizes $K$, considering background noise with rate $\lambda = 1\times 10^{-6}\,s^{-1}$. We observe that up to the optimal bit interval the BER decreases with the same slope as in the case without background noise~\mbox{(cf. Fig. \ref{fig:BER_vs_T_K})}, i.e. errors due to transpositions are dominant. After the optimal bit interval the BER increases again, i.e. errors due to background noise become dominant. Similarly, the goodput increases until the optimal bit interval and then decreases again. It is important to note that in contrast to the scenario without background noise the goodput does not necessarily achieve its maximum value $G=K$. 
Figs. \ref{fig:BER_vs_T_Lambda} and \ref{fig:TP_vs_T_Lambda} show the BER and goodput performance versus the bit interval~$T$ for various noise generation rates $\lambda$. We observe that if the noise generation rate $\lambda$ is increased the optimal bit interval decreases and, thus, the minimum achievable BER is reduced. Similarly, the maximum achievable goodput becomes lower as the noise generation rate increases.}

\section{Conclusions}
\label{sec:concl}

In this work, we comprehensively studied the impact of transposition errors on the performance of diffusion-based MC with and without drift. We used type-based information encoding, releasing a single molecule per information bit and the detection algorithm exploited the arrival order of the molecules. We presented an analytical expression for the exact BEP and derived computationally tractable approximations of the BEP for pure diffusion and drift channels. 
In addition, we also considered the impact of background noise and derived an approximation of the BEP and the optimal bit interval that minimizes the BEP. Numerical results revealed a huge performance gain for drift channels compared to pure diffusion channels in terms of error rate and goodput. We observed that pure diffusion-based MC is only appropriate for short block sizes and if the transmission time is not crucial. Extending the presented results for information encoding using more than two types of molecules would be an interesting future work. This would significantly reduce the error rate and the transmission time.

\balance

\ifdefined\ACK
  \section{Acknowledgment}
  \input{Acknowledgment}
\fi


\bibliographystyle{IEEEtran}
\bibliography{IEEEabrv,References}

\begin{thebibliography}{10}
\providecommand{\url}[1]{#1}
\csname url@samestyle\endcsname
\providecommand{\newblock}{\relax}
\providecommand{\bibinfo}[2]{#2}
\providecommand{\BIBentrySTDinterwordspacing}{\spaceskip=0pt\relax}
\providecommand{\BIBentryALTinterwordstretchfactor}{4}
\providecommand{\BIBentryALTinterwordspacing}{\spaceskip=\fontdimen2\font plus
\BIBentryALTinterwordstretchfactor\fontdimen3\font minus
  \fontdimen4\font\relax}
\providecommand{\BIBforeignlanguage}[2]{{%
\expandafter\ifx\csname l@#1\endcsname\relax
\typeout{** WARNING: IEEEtran.bst: No hyphenation pattern has been}%
\typeout{** loaded for the language `#1'. Using the pattern for}%
\typeout{** the default language instead.}%
\else
\language=\csname l@#1\endcsname
\fi
#2}}
\providecommand{\BIBdecl}{\relax}
\BIBdecl

\bibitem{Nakano2013}
T.~Nakano, A.~Eckford, and T.~Haraguchi, \emph{Molecular Communication}.\hskip
  1em plus 0.5em minus 0.4em\relax Cambridge University Press, 2013.

\bibitem{Farsad16}
N.~Farsad, H.~B. Yilmaz, A.~Eckford, C.~B. Chae, and W.~Guo, ``A comprehensive
  survey of recent advancements in molecular communication,'' \emph{{IEEE}
  Commun. Surveys Tuts.}, vol.~18, no.~3, pp. 1887--1919, thirdquarter 2016.

\bibitem{Wyatt2003}
T.~D. Wyatt, \emph{Pheromones and Animal Behaviour: Communication by Smell and
  Taste}.\hskip 1em plus 0.5em minus 0.4em\relax Cambridge University Press,
  2003.

\bibitem{Rose15ICC}
C.~Rose and I.~Mian, ``A fundamental framework for molecular communication
  channels: {T}iming and payload,'' in \emph{Proc. IEEE Int. Conf. Commun.},
  June 2015, pp. 1043--1048.

\bibitem{Pierobon11}
M.~Pierobon and I.~F. Akyildiz, ``Diffusion-based noise analysis for molecular
  communication in nanonetworks,'' \emph{{IEEE} Trans. Signal Process.},
  vol.~59, no.~6, pp. 2532--2547, June 2011.

\bibitem{Srinivas12}
K.~V. Srinivas, A.~W. Eckford, and R.~S. Adve, ``Molecular communication in
  fluid media: {T}he additive inverse {Gaussian} noise channel,'' \emph{{IEEE}
  Trans. Inf. Theory}, vol.~58, no.~7, pp. 4678--4692, July 2012.

\bibitem{Yilmaz14}
H.~B. Yilmaz, A.~C. Heren, T.~Tugcu, and C.~B. Chae, ``Three-dimensional
  channel characteristics for molecular communications with an absorbing
  receiver,'' \emph{{IEEE} Commun. Lett.}, vol.~18, no.~6, pp. 929--932, June
  2014.

\bibitem{Koo16}
B.~H. Koo, C.~Lee, H.~B. Yilmaz, N.~Farsad, A.~Eckford, and C.~B. Chae,
  ``Molecular {MIMO}: {F}rom theory to prototype,'' \emph{{IEEE} J. Sel. Areas
  Commun.}, vol.~34, no.~3, pp. 600--614, Mar. 2016.

\bibitem{Farsad17}
N.~Farsad, D.~Pan, and A.~J. Goldsmith, ``A novel experimental platform for
  in-vessel multi-chemical molecular communications,'' in \emph{Proc. IEEE
  Global Commun. Conf.}, Dec. 2017, pp. 1 -- 6.

\bibitem{Diez17}
A.~Llopis-Lorente, P.~Diez, A.~Sanchez, M.~D. Marcos, F.~Sancenon,
  P.~Martinez-Ruiz, R.~Villalonga, and R.~Martínez-Manez, ``Interactive models
  of communication at the nanoscale using nanoparticles that talk to one
  another,'' \emph{Nature Commun.}, pp. 1--7, 2017.

\bibitem{Chahibi17}
Y.~Chahibi, ``Molecular communication for drug delivery systems: {A} survey,''
  \emph{Nano Commun. Netw.}, vol.~11, pp. 90 -- 102, 2017.

\bibitem{Okonkwo17}
U.~A.~K. Okonkwo, R.~Malekian, B.~T. Maharaj, and A.~V. Vasilakos, ``Molecular
  communication and nanonetwork for targeted drug delivery: {A} survey,''
  \emph{{IEEE} Commun. Surveys Tuts.}, vol.~PP, no.~99, pp. 1--1, 2017.

\bibitem{Felicetti16}
L.~Felicetti, M.~Femminella, G.~Reali, and P.~Li\'{o}, ``Applications of
  molecular communications to medicine: {A} survey,'' \emph{Nano Commun.
  Netw.}, vol.~7, pp. 27 -- 45, 2016.

\bibitem{Mahfuz16}
M.~U. Mahfuz, D.~Makrakis, and H.~T. Mouftah, ``Concentration-encoded
  subdiffusive molecular communication: {T}heory, channel characteristics, and
  optimum signal detection,'' \emph{{IEEE} Trans. Nanobiosci.}, vol.~15, no.~6,
  pp. 533--548, Sept 2016.

\bibitem{Arjmandi17}
H.~Arjmandi, M.~Movahednasab, A.~Gohari, M.~Mirmohseni, M.~Nasiri-Kenari, and
  F.~Fekri, ``{ISI}-avoiding modulation for diffusion-based molecular
  communication,'' \emph{{IEEE} Trans. Mol., Biol. Multi-Scale Commun.},
  vol.~3, no.~1, pp. 48--59, March 2017.

\bibitem{Lin15}
Y.~K. Lin, W.~A. Lin, C.~H. Lee, and P.~C. Yeh, ``Asynchronous threshold-based
  detection for quantity-type-modulated molecular communication systems,''
  \emph{{IEEE} Trans. Mol., Biol. Multi-Scale Commun.}, vol.~1, no.~1, pp.
  37--49, March 2015.

\bibitem{Yeh12}
P.~C. Yeh, K.~C. Chen, Y.~C. Lee, L.~S. Meng, P.~J. Shih, P.~Y. Ko, W.~A. Lin,
  and C.~H. Lee, ``A new frontier of wireless communication theory:
  {D}iffusion-based molecular communications,'' \emph{{IEEE} Wireless Commun.},
  vol.~19, no.~5, pp. 28--35, Oct. 2012.

\bibitem{Murin17}
Y.~Murin, N.~Farsad, M.~Chowdhury, and A.~Goldsmith, ``Time-slotted
  transmission over molecular timing channels,'' \emph{{Nano Commun. Netw.}},
  vol.~12, pp. 12 -- 24, 2017.

\bibitem{Hsieh_13}
Y.-P. Hsieh, Y.-C. Lee, P.-J. Shih, P.-C. Yeh, and K.-C. Chen, ``On the
  asynchronous information embedding for event-driven systems in molecular
  communications,'' \emph{Nano Comm. Netw.}, vol.~4, pp. 2--13, 2013.

\bibitem{Haselmayr17}
W.~Haselmayr, S.~M.~H. Aejaz, A.~T. Asyhari, A.~Springer, and W.~Guo,
  ``Transposition errors in diffusion-based mobile molecular communication,''
  \emph{{IEEE} Commun. Lett.}, vol.~21, no.~9, pp. 1973--1976, Sept 2017.

\bibitem{Ahmadzadeh17}
A.~Ahmadzadeh, V.~Jamali, and R.~Schober, ``Statistical analysis of
  time-variant channels in diffusive mobile molecular communications,'' in
  \emph{Proc. IEEE Global Commun. Conf.}, Dec. 2017, pp. 1 -- 6.

\bibitem{Nakano10}
T.~Nakano and M.~J. Moore, ``In-sequence molecule delivery over an aqueous
  medium,'' \emph{Nano Commun. Netw.}, vol.~1, pp. 181--188, 2010.

\bibitem{Manocha16}
P.~Manocha, G.~Chandwani, and S.~Das, ``Dielectrophoretic relay assisted
  molecular communication for in-sequence molecule delivery,'' \emph{{IEEE}
  Trans. Nanobiosci.}, vol.~15, no.~7, pp. 781--791, Oct 2016.

\bibitem{Ko12}
P.-Y. Ko, Y.-C. Lee, P.~C. Yeh, C.~han Lee, and K.~C. Chen, ``A new paradigm
  for channel coding in diffusion-based molecular communications: {M}olecular
  coding distance function,'' in \emph{Proc. IEEE Global Commun. Conf.}, Dec
  2012, pp. 3748--3753.

\bibitem{Shih13}
P.~J. Shih, C.~H. Lee, P.~C. Yeh, and K.~C. Chen, ``Channel codes for
  reliability enhancement in molecular communication,'' \emph{{IEEE} J. Sel.
  Areas Commun.}, vol.~31, no.~12, pp. 857--867, Dec 2013.

\bibitem{Weisi16_1}
S.~Qiu, T.~Asyhari, and W.~Guo, ``Mobile molecular communications: {P}ositional
  distance codes,'' in \emph{Proc. IEEE Workshop Signal Process. Advances
  Wireless Commun.}, July 2016, pp. 1--6.

\bibitem{Gascoyne02}
P.~R.~C. Gascoyne, ``Cell dielectric properties as diagnostic markers for tumor
  cell isolation,'' in \emph{Tumor Markers: {P}hysiology, Pathobiology,
  Technology, and Clinical Applications}, 2002.

\bibitem{Kang08}
L.~Kang, ``Microfluidics for drug discovery and development: {F}rom target
  selection to product lifecycle management,'' \emph{Drug Discovery Today},
  vol.~13, no.~1, pp. 1 -- 13, 2008.

\bibitem{Farsad15}
N.~Farsad, W.~Guo, C.~B. Chae, and A.~Eckford, ``Stable distributions as noise
  models for molecular communication,'' in \emph{Proc. IEEE Global Commun.
  Conf.}, Dec 2015, pp. 1--6.

\bibitem{Haselmayr18}
W.~Haselmayr, D.~Efrosinin, and W.~Guo, ``Normal inverse {Gaussian}
  approximation for molecular communications,'' submitted to \textit{{IEEE}
  Trans. Mol., Biol. Multi-Scale Commun.}, 2018.

\bibitem{Apelblat83}
A.~Apelblat, \emph{Table of Definite and Infinite Integrals}.\hskip 1em plus
  0.5em minus 0.4em\relax Elsevier Scientific Pub. Co., 1983.

\bibitem{Murin17_2}
Y.~Murin, M.~Chowdhury, N.~Farsad, and A.~Goldsmith, ``Diversity gain of
  one-shot communication over molecular timing channels,'' in \emph{Proc. IEEE
  Global Commun. Conf.}, Dec 2017, pp. 1--6.

\bibitem{Furubayashi16}
T.~Furubayashi, T.~Nakano, A.~Eckford, Y.~Okaie, and T.~Yomo, ``Packet
  fragmentation and reassembly in molecular communication,'' \emph{{IEEE}
  Trans. Nanobiosci.}, vol.~15, no.~3, pp. 284--288, April 2016.

\end{thebibliography}

\end{document}